\theoremstyle{plain}
\newtheorem{theorem}{Theorem}[section]
\newtheorem{lemma}[theorem]{Lemma}
\newtheorem{proposition}[theorem]{Proposition}
\theoremstyle{definition}
\newtheorem{definition}[theorem]{Definition}
\newtheorem{example}[theorem]{Example}
\theoremstyle{remark}
\newtheorem{remark}[theorem]{Remark}
\newcommand{\lp}{\langle}
\newcommand{\rp}{\rangle}
\newcommand{\Z}{\mathbb{Z}}
\newcommand{\cP}{\mathscr{P}}
\newcommand{\tC}{\widetilde{C}}
\newcommand{\hC}{\widehat{C}}
\newcommand{\hf}{\widehat{f}}
\newcommand{\tf}{\tilde f}
\newcommand\vlambda{{\boldsymbol{\lambda}}}
\newcommand{\vc}{\mathbf{c}}
\newcommand{\vg}{\mathbf{g}}
\newcommand{\vv}{\mathbf{v}}
\newcommand{\vw}{\mathbf{w}}
\newcommand{\vb}{\mathbf{b}}
\newcommand{\ve}{\mathbf{e}}
\newcommand{\vx}{\mathbf{x}}
\newcommand{\vy}{\mathbf{y}}
\newcommand{\vz}{\mathbf{z}}
\newcommand{\vzero}{\mathbf{0}}
\newcommand{\F}{\mathbb{F}}
\newcommand{\cV}{\mathcal{V}}
\newcommand{\Gal}{\textrm{Gal}}
\newcommand{\GL}{\operatorname{GL}}
 \newcommand{\GammaL}{\Gamma
\operatorname{L}}  
\newcommand{\Mat}{\textrm{Mat}}
\newcommand{\RM}{\textrm{RM}}
\newcommand{\Vector}{\textrm{Vec}}
\newcommand{\epsB}{\epsilon_{\vb}}
\newcommand{\epsg}{\epsilon_{\vg}}
\newcommand{\Matlm}{\F_q^{l \times m}}
\newcommand{\Matml}{\F_q^{m \times l}}
\newcommand{\Span}{\operatorname{span}} 
\newcommand{\rank}{\operatorname{rank}}
\newcommand{\LEquiv}{LEquiv}
\newcommand{\SLEquiv}{SLEquiv}
\newcommand{\LAut}{LAut}
\newcommand{\SLAut}{SLAut}
\title{Equivalence for Rank-metric and Matrix Codes\\ and Automorphism Groups of Gabidulin Codes}
\author{Katherine Morrison
\thanks{This work was supported in part by NSF grants grants DMS-0735099, DMS-0903517 and DMS-0838463, as well as the DOE grant P200A090002.}
\thanks{K.\ Morrison is with the School of Mathematical Sciences, University of Northern Colorado, Greeley, CO 80639, USA 
  {\tt\small katherine.morrison@unco.edu}}%
 \thanks{ This work stemmed from her Ph.D. thesis, which was completed under the supervision of Judy L.\ Walker in the Department
  of Mathematics, University of Nebraska, Lincoln, NE 68588-0130, USA}%

}
\begin{document}
\maketitle
\begin{abstract}
For a growing number of applications such as cellular, peer-to-peer, and sensor networks, efficient error-free transmission of data through a network is essential.  Toward this end, K\"{o}tter and Kschischang propose the use of subspace codes to provide error correction in the network coding context.  The primary construction for subspace codes is the lifting of rank-metric or matrix codes, a process that preserves the structural and distance properties of the underlying code.  Thus, to characterize the structure and error-correcting capability of these subspace codes, it is valuable to perform such a characterization of the underlying rank-metric and matrix codes.  This paper lays a foundation for this analysis through a framework for classifying rank-metric and matrix codes based on their structure and distance properties.

To enable this classification, we extend work by Berger on equivalence for rank-metric codes to define a notion of equivalence for matrix codes, and we characterize the group structure of the collection of maps that preserve such equivalence.  We then compare the notions of equivalence for these two related types of codes and show that matrix equivalence is strictly more general than rank-metric equivalence.  Finally, we characterize the set of equivalence maps that fix the prominent class of rank-metric codes known as Gabidulin codes. In particular, we give a complete characterization of the rank-metric automorphism group of Gabidulin codes, correcting work by Berger, and give a partial characterization of the matrix-automorphism group of the expanded matrix codes that arise from Gabidulin codes.
\end{abstract}

\section{Introduction}\label{intro}
To enable efficient transmission of information through a network, Ahlswede, et al.\ \cite{ACLY00} propose a method known as \emph{network coding}.  With this approach, each node has the ability to
intelligently combine, or \emph{code}, the information coming into it, and then pass along this new encoded data toward the sink; this coding often enables the network to achieve a higher \emph{throughput}, i.e.\ a higher amount of information transmitted from the source to each receiver in a single unit of time.  For certain classes of networks, K\"{o}tter and M\'edard \cite{KM03} prove that it is sufficient to use \emph{random linear network coding} to achieve capacity; in other words, it is sufficient to simply allow each internal node to pass along a randomly generated linear combination of its inputs as long as the coefficients of each linear combination live over a sufficiently large finite field, and this method produces the largest possible throughput for the network.  In addition to achieving capacity, random linear network coding has the added benefit that code design and implementation are independent of the network topology, and so can be used in cases where the network topology is unknown or often changing, as with peer-to-peer networks. As a result, random linear network coding is highly appealing for widespread implementation.

A significant drawback of network coding arises, however, when noise is introduced at any of the internal nodes or links.  Even a single error introduced somewhere early in the network can propagate through to potentially corrupt all the final outputs; thus, some form of error correction is necessary.  Since random linear network coding outputs linear combinations of the input vectors, the subspace of input vectors is preserved at the output.  K\"{o}tter and Kschischang \cite{KK08} propose the use of \emph{subspace codes}, i.e.\ carefully chosen collections of subspaces, to provide error correction for random linear network coding.  They also propose a simple construction for subspace codes via the lifting of linear codes whose codewords are either matrices over $\F_q$ or whose codewords are vectors over $\F_{q^m}$ equipped with the rank distance.  Codes with matrices as codewords are also known as \emph{matrix codes}, array codes e.g.\ \cite{Blaum}, or space-time codes over a finite field e.g.\ \cite{GV08}, while the codes consisting of vectors over $\F_{q^m}$ are known as \emph{rank-metric codes}. In particular, K\"{o}tter and Kschischang propose lifting Gabidulin codes, which comprise a class of optimal rank-metric codes. Additionally, they introduce a metric on the collection of subspaces and define a minimum-distance decoder for subspace codes.  The subspace metric turns out to be a scalar multiple of the rank metric when the subspaces are lifted from matrix or rank-metric codes \cite{SKK08}, and so it is valuable to study the structure and distance properties of such codes.  Thus, the primary focus of this work is to provide a framework for classifying rank-metric and matrix codes based on these defining properties.  

 In Section \ref{background}, we give some necessary background on subspace codes with a focus on the lifted matrix code and lifted rank-metric code constructions.  We review K\"{o}tter and Kschischang's foundational subspace code construction of lifted Gabidulin codes, and so we give a definition of Gabidulin codes as well.  We then turn to an analysis of the underlying rank-metric and matrix codes.  To enable this analysis, we begin by characterizing the relationship between these two classes of codes via the linear map $\epsB$, which expands elements of an $\F_q$-extension field with respect to an ordered basis $\vb$ for that field as an $\F_q$-vector space.  

In Section \ref{equivalence}, we then turn toward classifying rank-metric and matrix codes in terms of their structural and distance properties.  This leads us to a definition of \emph{equivalence} for each class of codes.  With this in place, in Subsections \ref{RMEquivMaps} and \ref{MatEquivMaps}, we characterize the collections of linear and semi-linear equivalence maps for rank-metric codes, correcting a result from \cite{Berger03}, and contrast these with the collections of linear and semi-linear equivalence maps for matrix codes, again appealing to the map $\epsB$ to navigate between these two code types.

In Section \ref{autgroupssection}, we investigate the subset of linear equivalence maps that fix a given code, which is termed the \emph{linear automorphism group} of the code.    We provide a complete characterization of the linear automorphism group of the class of rank-metric codes known as Gabidulin codes in Subsection \ref{RMAutGroup}.  Berger previously attempted to characterize this group, but we have found a flaw in his proof and provide counterexamples to his characterization.  Finally, in Subsection \ref{MatAutGroup}, we give a partial characterization of the linear automorphism group of the matrix codes that arise from expanding Gabidulin codes via the map $\epsB$.  

\section{Background}\label{background}
\subsection{Subspace Codes for Random Linear Network Coding}\label{subspacecodes}
As outlined in the Section \ref{intro}, since random linear network coding only preserves the subspace of information vectors injected by the source, and errors introduced by the network will shift this subspace to another ``nearby" subspace, K\"{o}tter and Kschischang propose the use of subspace codes with an appropriate distance metric to provide error correction in this context \cite{KK08}.  This motivates the following definitions.

\begin{definition}[\cite{KK08}]\label{subspacedist}
 Let $\cV$ be an $n$-dimensional vector space over
  $\F_q$ and let $\cP(\cV)$ be the collection of subspaces of $\cV$.
  A \emph{subspace code} is a non-empty subset of $\cP(V)$.  For any
  $U, V \in \cP(\cV)$, the \emph{subspace distance} between $U$ and
  $V$ is defined as $d_S(U,V):=\dim (U+V) - \dim( U\cap V).$
\end{definition}

For ease of decoding, K\"{o}tter and Kschischang restrict to codes where all the subspace codewords have the same dimension; thus, each receiver knows to stop collecting additional vectors from the network once it has accumulated a set number of linearly independent output vectors.  These subspace codes are known as \emph{constant dimension} codes:
\begin{definition} A subspace code $C \subseteq \cP(\cV)$ is called a
  \emph{constant-dimension subspace code} if $\dim U=\dim V$ for all $U, V \in C$. 
  \end{definition}

A constant-dimension subspace code $C \subseteq \cP(\cV)$ over $\F_q$ is typically
described by a 4-tuple of parameters $[n,l,k, d_{S,\min}]_q$, where $n$ is
the dimension of $\cV$; $l$ is the dimension of each $U \in C$, which is necessarily less than or equal to $n$; $k =
\log_q|C|$ is the log of the number of codewords in $C$; and $d_{S,\min}
= d_{S,\min}(C)$ is the minimum subspace distance between any two
distinct codewords in $C$.

\subsection{Lifted Matrix Codes}
We may fix a basis for the $n$-dimensional ambient $\F_q$-vector space $\cV$ and identify $\cV$ with $\F_q^n$ via this
choice of basis.  There is then a one-to-one correspondence between
$l$-dimensional subspaces of $\cV$ and $l \times n$ matrices in
reduced row echelon form, given by $U \leftrightarrow M$, where $M$ is
the unique matrix in reduced row echelon form whose rows form a basis
for $U$.  Recall that the \emph{pivot locations} of an $l \times n$
matrix $M$ in reduced row echelon form are the integers $i$, $1 \leq i
\leq n$, such that the $i^{\text{th}}$ column of $M$ is a standard
basis vector.  If $M$ has rank $l$, then there will be precisely $l$
pivot locations; if the pivot locations are $i_1<i_2<\dots<i_l$, then
the $l \times l$ matrix whose $s^{\text{th}}$ column is the
$i_s^{\text{th}}$ column of $M$ is simply the identity matrix.  If $M$ is the
matrix in reduced row echelon form corresponding to the subspace $U$
of $\cV$, then we will abuse terminology and refer to the pivot
locations of $M$ as the pivot locations of $U$.

\begin{definition}[\cite{SKK08}]\label{liftedmatrixcode}
Let $C$ be an $[n,l,k,d_{S,\min}]$ constant-dimension
  subspace code and suppose the 
  pivot locations for every $U \in C$ coincide.  Then we call $C$ a
  \emph{lifted matrix code}. For $U \in C$, let $M$ be the corresponding matrix in
  reduced row echelon form and let $1 \leq j_1<j_2<\dots <j_{n-l}\leq
  n$ be the non-pivot locations of $M$.  The $l \times (n-l)$ matrix
  $A$ whose $s^{\text{th}}$ column is the $j_s^{\text{th}}$ column of
  $M$ is the \emph{auxiliary matrix} for $U$.  We call the collection   
  \[
  \hC =  \{A \in \F_q^{l \times (n-l)}~|~ A \text{ is an auxiliary matrix of some } U \in C\}
  \]
of auxiliary matrices of codewords of $C$ the \emph{underlying matrix code of $C$}, and we define the \emph{minimum rank-distance} $d_{R,min}$ of $\hC$ to be the minimum over all $A, B \in \hC$ of the rank distance $d_R(A,B):=\rank(A-B)$.
\end{definition}

The previous definition outlines how to identify that a subspace code is a lifted matrix code, but one may also easily construct a subspace code by lifting a matrix code of appropriate dimensions.  In particular, once $l$ columns are selected as pivot locations and a matrix code $\hC \subseteq \F_q^{l \times (n-l)}$ is chosen, the lifted matrix code is obtained by appropriately interspersing the columns of the matrix codewords with the pivot columns.  Observe that each choice of pivot columns will result in a distinct constant-dimension subspace code; however, the distance distribution is independent of the choice of pivot columns, and is determined only by that of the underlying matrix code.  This result is captured in Lemma \ref{subspacedistance} from \cite{SKK08}.

\begin{lemma}[Silva, Kschischang, K\"{o}tter \cite{SKK08}] \label{subspacedistance}Suppose $U,V
  \in \cP(V)$ are $l$-dimensional subspaces of $\cV$ that have the
  same pivot locations and let $A$ and $B$ be the $l \times (n-l)$
  auxiliary matrices for $U$ and $V$, respectively.  Then $d_S(U,V) = 2\rank(A-B) = 2d_R(A,B)$.
Hence, if $C$ is a lifted matrix code and $\hC$ is the corresponding matrix code, then $d_{S,min}(C) = 2d_{R,min}(\hC).$
\end{lemma}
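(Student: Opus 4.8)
The plan is to establish the pointwise identity $d_S(U,V) = 2\,\rank(A-B)$ for two $l$-dimensional subspaces sharing the same pivot locations, since the statement about minimum distances follows immediately by minimizing over all pairs of distinct codewords. So the core of the argument is a direct computation of $\dim(U+V)$ and $\dim(U\cap V)$ in terms of the auxiliary matrices $A$ and $B$.

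First I would fix the shared pivot locations $i_1<\dots<i_l$ and non-pivot locations $j_1<\dots<j_{n-l}$, and let $M_U$ and $M_V$ be the reduced row echelon matrices for $U$ and $V$. Because the pivot columns of both matrices are the standard basis vectors $\ve_1,\dots,\ve_l$ sitting in the same positions, I can reorder columns by a fixed permutation $\sigma$ (a linear isometry on $\cV$ that preserves both dimensions and rank) so that both matrices take the block form $[\,I_l \mid A\,]$ and $[\,I_l \mid B\,]$, where $I_l$ occupies the pivot columns and $A$, $B$ are the auxiliary matrices. Since $U+V$ is spanned by the rows of $M_U$ together with the rows of $M_V$, its dimension is the rank of the stacked $2l\times n$ matrix
\begin{equation*}
\begin{pmatrix} I_l & A \\ I_l & B \end{pmatrix}.
\end{equation*}

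Next I would row-reduce this stacked matrix by subtracting the top block from the bottom, yielding
\begin{equation*}
\begin{pmatrix} I_l & A \\ 0 & B-A \end{pmatrix},
\end{equation*}
whose rank is visibly $l + \rank(B-A) = l + \rank(A-B)$. Hence $\dim(U+V) = l + \rank(A-B)$. Now I would invoke the dimension formula $\dim(U+V) + \dim(U\cap V) = \dim U + \dim V = 2l$, which gives $\dim(U\cap V) = 2l - \dim(U+V) = l - \rank(A-B)$. Substituting both quantities into the definition $d_S(U,V) = \dim(U+V) - \dim(U\cap V)$ produces $\bigl(l+\rank(A-B)\bigr) - \bigl(l-\rank(A-B)\bigr) = 2\,\rank(A-B) = 2\,d_R(A,B)$, as claimed.

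I do not anticipate a genuine obstacle here; the result is essentially a bookkeeping computation once the block structure is exposed. The one point requiring minor care is justifying that the column permutation aligning pivots to the left does not alter any of the dimensions or ranks involved — this holds because such a permutation is an invertible linear map on $\cV$, so it preserves $\dim(U+V)$, $\dim(U\cap V)$, and $\rank(A-B)$ simultaneously. The final sentence of the statement, that $d_{S,\min}(C) = 2\,d_{R,\min}(\hC)$, then follows by taking the minimum of the pointwise identity over all pairs of distinct codewords, using that the correspondence $U \leftrightarrow A$ between codewords of $C$ and auxiliary matrices of $\hC$ is a bijection preserving distinct pairs.
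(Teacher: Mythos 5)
Your proof is correct: the column permutation aligning the shared pivots, the rank computation on the stacked matrix $\bigl(\begin{smallmatrix} I_l & A \\ I_l & B \end{smallmatrix}\bigr)$, and the dimension formula for $\dim(U\cap V)$ together give exactly $d_S(U,V)=2\,\rank(A-B)$. The paper itself states this lemma without proof, citing Silva--Kschischang--K\"otter, and your argument is precisely the standard one from that reference, so there is nothing to flag.
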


Given this simple correspondence between the distance distributions of lifted matrix codes and their underlying matrix codes, a natural next direction is to seek out constructions of matrix codes with good distance properties.  Toward this end, we examine the relationship between matrix codes and another class of codes known as \emph{rank-metric} codes.  To facilitate this, we first need to fix some notation.

\subsection{Lifted Rank-Metric Codes}
\begin{definition}
Fix an ordered basis $\vb = (b_1, \ldots, b_m)$ for $\F_{q^m}$ over $\F_q$.  The \emph{vector expansion with respect to $\vb$} is the map $\epsB: \F_{q^m}\to \F_q^m$ given by $\epsB(a)=(a_1,a_2, \ldots, a_m)$ where $a=a_1b_1+a_2b_2+\ldots +a_{m}b_m$.  
\end{definition}
\begin{remark}\label{epsBlinear}
Observe that $\epsB$ is $\F_q$-linear, but not a field homomorphism.
\end{remark}

In addition to expanding elements of $\F_{q^m}$ to form vectors in $\F_q^m$, we will also need to consider expanding vectors in $\F_{q^m}^l$ to form matrices in $\Matlm$.  As a slight abuse of notation, we will denote the map for this expansion by $\epsB$ as well.  Thus, we obtain the following definition.

\begin{definition}
Fix an ordered basis $\vb=(b_1, \ldots, b_m)$ for $\F_{q^m}$ over $\F_q$.  \emph{Matrix expansion with respect to $\vb$} is the map $\epsB:\F_{q^m}^l~\to~\Matlm$ given by 
$$\epsB(\vx)={\footnotesize  \begin{bmatrix}\epsB(x_1)\\ \epsB(x_2)\\\vdots\\ \epsB(x_{l})\end{bmatrix}}$$ 
where $\vx=(x_1, x_2, \ldots, x_l)$.  
\end{definition}

With this notation in place, we now define the notion of rank-metric distance and thus rank-metric codes.
\begin{definition}[\cite{Gabidulin85}]\label{RankMetricDef}
Fix an ordered basis $\vb$ for $\F_{q^m}$ over $\F_q$.  For $\vx, \vy \in \F_{q^m}^l$, the \emph{rank-metric distance} between
$\vx$ and $\vy$ is
\begin{eqnarray*}
d_{R}(\vx,\vy) &=& \text{dim}_{\F_q} \left(\text{span}_{\F_q} \{x_1 - y_1,~ x_2 -y_2,  \ldots,~ x_l-y_l \} \right)\\
&=& \rank(\epsB(\vx)-\epsB(\vy)).
\end{eqnarray*}
The \emph{rank-metric weight}, or \emph{rank}, of a vector $ \vx \in \F_{q^m}^l$ is 
\begin{eqnarray*}
d_{R}(\vx,\vzero) &=& \text{dim}_{\F_q} \left(\text{span}_{\F_q} \{x_1 ,~ x_2,  \ldots,~ x_l\} \right)=\rank\epsB(\vx).
\end{eqnarray*}
A \emph{rank-metric code} of length $l$ and minimum rank-metric distance
$d_{R,\min} = d_{R,\min}(C)$ over $\F_{q^m}$ is a subset $C$ of
$\F_{q^m}^l$ such that $d_{R, \min} = \displaystyle \hspace{-.35in}\min_{{\scriptsize \begin{array}{c} \vx, \vy \in C,~  \vx\neq \vy \end{array}}} \hspace{-.3in}d_R(\vx, \vy).$\\
\vspace{-.2in}\\
If $C \subseteq
\F_{q^m}^l$ is a rank-metric code then
\[
\epsB(C)= \{\epsB(\vx) \, | \, \vx \in C\} \subseteq \Matlm
\]
is called the \emph{expanded matrix code} of $C$.
\end{definition}
\begin{remark}
Since the rank-metric distance between two vectors equals the rank distance between their corresponding matrix expansions, the rank-metric distance on vectors is equivalent to the rank distance on matrices.  Furthermore, the rank-metric distance is independent of the choice of basis $\vb$ for the ambient space.  For this reason, we denote both distances by $d_R$, and we assume that the context will make it clear whether it is necessary to first apply $\epsB$ to evaluate that distance measure.  
\end{remark}
Since it is possible to obtain a matrix code from any rank-metric code, we also have a notion of lifted rank-metric codes.
\begin{definition}
Fix an ordered basis $\vb$ for $\F_{q^m}$ over $\F_q$.  Let $C \subseteq \F_{q^{m}}^l$ be a rank-metric code with expanded matrix code $\epsB(C)$.  Given $l$ integers, $1\leq i_1<i_2<\dots<i_l\leq l+m$, the \emph{lifted rank-metric code} of $C$ is the lifted matrix code $\hC \subseteq \cP(\F_q^{l+m})$, as in Definition \ref{liftedmatrixcode}, whose pivot columns are $ i_1, i_2, \ldots, i_l$ and whose underlying matrix code is $\epsB(C)$.
\end{definition}
By Lemma \ref{subspacedistance} and the definition of rank-metric distance for rank-metric codes, we see that if $C$ is a rank-metric code of minimum rank-metric distance $d_{R,min}$, then the lifted rank-metric code $\hC$ has minimum subspace distance $d_{S,min}=2d_{R,min}$.  Thus, any $[l,k,d_{R,min}]_{q^m}$ rank-metric code gives rise to an $[l+m, l, km, 2d_{R,min}]_q$ lifted rank-metric code. 

K\"{o}tter and Kschischang first proposed the construction of lifting rank-metric codes in their seminal paper \cite{KK08}, where they focused specifically on lifting the family of rank-metric codes that have become known as \emph{Gabidulin codes}.  These codes are $q^m$-ary analogues of Reed-Solomon codes that are optimal for the rank-metric distance in that they meet a rank-metric analogue of the Singleton bound.  We review two constructions of Gabidulin codes below as these are the most prominent rank-metric codes.  

In keeping with Gabidulin's original notation, we will use $a^{[i]}$ to mean $a^{q^i}$ for any $a \in \F_{q^m}$ and integer $i$.
\begin{definition}[Theorems 6 and 7 in  \cite{Gabidulin85}]\label{GabCodeDef}
An $[n,k, d]_{q^m}$ \emph{Gabidulin code} $C$, with $k=n-d+1$ and $n<m$, is a code defined by a parity-check matrix of the form
\[
H={\footnotesize \begin{bmatrix} h_1 &h_2 &\ldots &h_n \\ h_1^{[1]} & h_2^{[1]} & \ldots  &h_n^{[1]}\\ \vdots &\vdots &\vdots& \vdots \\ h_1^{[d-2]} & h_2^{[d-2]}  &\ldots  &h_n^{[d-2]} \end{bmatrix},}
\]
where $\{h_i \in \F_{q^m}~|~1 \leq i \leq n\}$ are linearly independent over $\F_q$.  
Equivalently, an $[n,k, d]_{q^m}$ \emph{Gabidulin code} $C$, with $k=n-d+1$ and $n<m$, is a code defined by a generator matrix of the form
\[
G={\footnotesize \begin{bmatrix} g_1  &g_2 & \ldots & g_n \\ g_1^{[1]}  &g_2^{[1]}  &\ldots  &g_n^{[1]}\\ \vdots& \vdots &\vdots &\vdots \\ g_1^{[k-1]} & g_2^{[k-1]} & \ldots  &g_n^{[k-1]} \end{bmatrix},}
\]
where $\{g_i \in \F_{q^m}~|~1 \leq i \leq n\}$ are linearly independent over $\F_q$.  For compactness, we denote such a Gabidulin code by $C_{k,\vg, q^m}$ where $\vg=(g_1, g_2, \ldots, g_n)$ similar to the notation in \cite{Berger03}.  Any vector $\vg \in \F_{q^m}^n$ whose entries are  linearly independent over $\F_q$ will be called a \emph{Gabidulin vector}, since such a vector can be used to define a Gabidulin code.  
\end{definition}

In analogy with Reed-Solomon codes, Gabidulin codes may equivalently be defined in terms of evaluating linearized polynomials at a collection of linearly independent points in $\F_{q^m}$  \cite{Gabidulin85}.  This equivalent definition allows for the creation of efficient encoding and decoding algorithms.  We will not need that construction here, however, and so we refer the interested reader to \cite{Gabidulin85} for further details.

In \cite{KK08}, K\"{o}tter and Kschischang prove a Singleton bound for constant-dimension subspace codes in analogy with the Singleton bound for block codes.  They then give an asymptotic version of this bound and show that the family of lifted Gabidulin codes aymptotically achieves this bound.  Given the asymptotic optimality of certain lifted rank-metric and matrix codes, we are led to further investigate the structure and distance properties of underlying rank-metric and matrix codes as these may prove valuable for further lifted subspace code constructions.  To enable this investigation, we must be able to classify codes based on these structural and distance properties, and so we are led here to define and examine an appropriate notion of code \emph{equivalence}.  In the following section, we will examine the appropriate notion of equivalence for rank-metric codes as well as the notion of equivalence for matrix codes, with an eye towards comparing and contrasting the resulting equivalence maps.  

To enable this comparison of equivalence maps, we need a method for translating between rank-metric and matrix codes since each equivalence map is only defined to operate on one of these code types.  Thus far, we have seen that to any rank-metric code $C \subseteq \F_{q^m}^l$, we may associate a matrix code $\epsB(C)\subseteq \Matlm$ by expanding $C$ with respect to some ordered basis $\vb$ for $\F_{q^m}$ as an $\F_q$-vector space.  But to enable our comparison, we must also have a map to translate back from matrix codes to rank-metric codes; this is accomplished via $\epsB^{-1}$, which compresses the matrix code $\hC$ with respect to the basis $\vb$. This notion of compression and the mechanism for accomplishing it is made more precise in the following definition.

\begin{definition}\label{compression}
Fix an ordered basis $\vb=(b_1,b_2, \ldots, b_m)$ for $\F_{q^m}$ over $\F_q$.  \emph{Matrix compression with respect to $\vb$} is the map $\epsB^{-1}: \Matlm \to \F_{q^m}^l$ given by 
\begin{eqnarray*}
\epsB^{-1}(X)&=&\left(\sum_{j=1}^m x_{1 j}b_j, \sum_{j=1}^mx_{2 j}b_j, \ldots, \sum_{j=1}^m x_{l j}b_j\right)\\
& =& \left(X(b_1, \ldots, b_m)^\top\right)^\top\\
&=& (b_1, \ldots, b_m) X^\top
\end{eqnarray*}
 where $X=[x_{ij}]\in \Matlm$.    
If $\hC \subseteq \Matlm$ is a matrix code, then $\epsB^{-1}(\hC)= \{\epsB^{-1}(X) \, | \, X \in \hC\} \subseteq \F_{q^m}^l$
is called the \emph{compressed rank-metric code of $\hC$}.
\end{definition}

\section{Equivalence for Rank-metric and Matrix Codes} \label{equivalence}
Intuitively, two codes should be considered equivalent if they share all the same properties and structure.  In particular, equivalent codes should have the same distance distribution and the same number of codewords, or dimension if the codes are linear.  To preserve the dimension of a linear code, any map between equivalent codes must take a subspace to a subspace of the same dimension; we will term such a map to be \emph{subspace-preserving}.  While we would additionally desire that an equivalence map be distance-preserving, this characteristic is significantly harder to enforce than simply requiring that the map be weight-preserving.  It is well-known, however, that for linear codes the distance and weight distributions coincide, and so any additive map between linear codes that is subspace-preserving and weight-preserving is also distance-preserving.  Thus, we simplify the notion of equivalence maps as follows:
 we say a map between codes is an \emph{equivalence map} if it is additive, weight-preserving, and subspace-preserving.  While there is a broad collection of subspace-preserving maps, we will restrict to the class of \emph{semi-linear} maps (a notion that will be made precise below) because those have the greatest structure in terms of linearity that still allows for possible renaming/reordering of elements of $\F_q$.  We will also consider restriction to the class of \emph{linear} maps because analysis of these maps is generally simpler and cleaner than that of semi-linear maps, and is often a necessary first step to characterizing the semi-linear maps.  

 \begin{definition}[\cite{HuffmanChapter}]
A map $f: \F_q^n \to \F_q^n$ is \emph{semi-linear} if it satisfies the following conditions:
\begin{enumerate}
\item $f(\vx+ \vy) = f(\vx)+f(\vy)$ for every $\vx, \vy \in \F_q^n$, and 
\item there is some $\gamma \in \Gal(\F_q/\F_p)$ such that $f(\alpha\vx)=\alpha^\gamma f(\vx)$ for every $\alpha \in \F_q$ and every $\vx \in \F_q^n$, where $q=p^e$ for some $e$.
\end{enumerate}
\noindent The collection of invertible semi-linear maps on $\F_q^n$ under composition is called the \emph{general semi-linear group} and is denoted $\GammaL_n(\F_q)$, in analogy with the general linear group $\GL_n(\F_q)$.
\end{definition}

It is clear that any invertible semi-linear map is subspace-preserving since it is simply the composition of a linear map and an $\F_p$-automorphism of $\F_q$.  In fact, the collection of invertible semi-linear maps forms a subgroup of the group of subspace-preserving maps. More precisely, $\GammaL_n(\F_q)$ is the semi-direct product of the collection of invertible linear maps with the collection of $\F_p$-automorphisms of $\F_q$ \cite{HuffmanChapter}.  A typical element of $\GammaL_n(\F_q)= \GL_n(\F_q)~\rtimes~\Gal(\F_q/\F_p)$ is an ordered pair $(A; \gamma)$, which acts on $\F_q^n$ via $\vx(A; \gamma) = (\vx A)^\gamma$, where $\gamma$ acts on $\vx A\in \F_q^n$ coordinate-wise.  Note that we may also consider $\gamma \in \Gal(\F_q/\F_p)$ as acting on $\Matlm$ coordinate-wise, which enables us to express multiplication in the group $\GammaL_n(\F_q)$ by 
\[
(A_1; \gamma_1)(A_2; \gamma_2) = \left(A_1A_2^{\left(\gamma_1^{-1}\right)}; \gamma_1\gamma_2\right)
\]
since we have 
\[
\vx(A_1; \gamma_1)(A_2; \gamma_2) = \left((\vx A_1)^{\gamma_1}\right)(A_2; \gamma_2) = \left((\vx A_1)^{\gamma_1}A_2\right)^{\gamma_2} = \left(\vx A_1 A_2^{\gamma_1^{-1}}\right)^{\gamma_1\gamma_2}.
\]

\subsection{Review of Block Code Equivalence Maps}
With the background on semi-linear maps in place, we may now define precisely the notion of equivalence for rank-metric codes in $\F_{q^m}^l$ and for matrix codes in $\Matlm$.  To place this work in context, we begin by quickly recalling the notion of equivalence for block codes in $\F_q^n$.
 \begin{definition}[\cite{HuffmanChapter}]\label{blockequivalence}
 An invertible map $f:\F_q^n \to \F_q^n$ is a \emph{linear block-equivalence map} if $f$ is $\F_q$-linear and preserves Hamming weight.  Similarly, an invertible map $f:\F_q^n \to \F_q^n$ is a \emph{semi-linear block-equivalence map} if $f$ is $\F_q$-semi-linear and preserves Hamming weight. Two block codes $C, \tC \subseteq \F_q^{n}$ are \emph{(semi-)linearly block-equivalent} if there exists a (semi-)linear block-equivalence map $f$ such that $\tC= f(C)$.
  \end{definition}
    \begin{remark}
Since the composition of two linear Hamming weight-preserving maps is also a linear Hamming weight-preserving map, the collection of linear block equivalence maps forms a group under composition.  An analogous statement holds in the semi-linear case. 
  \end{remark}
Recall that a monomial matrix is a matrix that has precisely one non-zero entry in each row and each column.  Any monomial matrix can be written in the form $DP$ where $D$ is an invertible diagonal matrix and $P$ is a permutation matrix 
\cite{SelfDualChapter}.  As a consequence of the MacWilliams Extension Theorem \cite{MacWilliamsThesis}, the monomial matrices are the only Hamming-weight preserving linear maps.  Thus, the collection of linear block-equivalence maps is precisely the subgroup of monomial matrices \cite{HuffmanChapter}.  Furthermore, since field automorphisms preserve Hamming weight, the collection of semi-linear block-equivalence maps is the subgroup formed from the semi-direct product of the monomial matrices with the group $\Gal(\F_q/\F_p)$.
%
%
%
 \subsection{Rank-Metric Code Equivalence Maps}\label{RMEquivMaps}
  
 We now turn to the notion of equivalence for rank-metric codes, which was first studied by Berger in  \cite{Berger03}. 
 \begin{definition}[\cite{Berger03}]\label{RMequivalence}
 An invertible map $f:\F_{q^m}^l \to \F_{q^m}^l$ is a \emph{linear rank-metric-equivalence map} if $f$ is $\F_{q^m}$-linear and preserves rank weight.   The collection of linear rank-metric-equivalence maps is denoted by $\LEquiv_{\RM}(\F_{q^m}^l)$. Similarly, an invertible map $f:\F_{q^m}^l \to \F_{q^m}^l$ is a \emph{semi-linear rank-metric-equivalence map} if $f$ is $\F_{q^m}/\F_p$-semi-linear and preserves rank weight.   The collection of semi-linear rank-metric-equivalence maps is denoted by $\SLEquiv_{\RM}(\F_{q^m}^l)$. Two rank-metric codes $C, \tC \subseteq \F_{q^m}^l$ are \emph{(semi-)linearly rank-metric-equivalent} if there exists a (semi-)linear rank-metric-equivalence map $f$ such that $\tC= f(C)$.
  \end{definition}
   \begin{remark}
Since the composition of two linear rank weight-preserving maps is also a linear rank weight-preserving map, the collection of linear rank-metric equivalence maps forms a group under composition.  An analogous statement holds in the semi-linear case. 
  \end{remark}
\begin{remark} 
Berger  \cite{Berger03} refers to linear rank-metric equivalence maps as \emph{linear isometries} and uses the notation $Iso(\F_{q^m}^l)$ to refer to the collection of such maps.  Similarly, he terms semi-linear rank-metric equivalence maps as \emph{semi-linear isometries} and uses the notation $SIso(\F_{q^m}^l)$ to refer to the collection of such maps.  We will use the term \emph{equivalence map} here, however, for consistency with the block code literature examining equivalence classes of codes. 
\end{remark}

In  \cite{Berger03}, Berger precisely describes the collections of linear and semi-linear rank-metric-equivalence maps.  He shows that the only linear rank-metric-equivalence maps are multiplication by non-zero scalars in $\F_{q^m}$, multiplication on the right by elements of $\GL_{l}(\F_q)$, and compositions thereof; he also shows that the only semi-linear rank-metric equivalence maps are given by a linear rank-metric equivalence map composed with coordinate-wise application of automorphisms in $\Gal(\F_{q^m}/\F_p)$. Propositions \ref{LEquivRM} and \ref{SLEquivRM} give the group structure for these two collections of maps.

\begin{proposition}\label{LEquivRM}
 The group of linear rank-metric-equivalence maps on $\F_{q^m}^l$ satisfies
 \[
\LEquiv_{\RM}(\F_{q^m}^l)\cong \left(\F_{q^m}^* \times \GL_l(\F_q)\right) /N,
\] 
where $N=\{(\lambda, \lambda^{-1}I_l)~|~\lambda~\in~\F_q^*\} \leq \F_{q^m}^* \times \GL_l(\F_q)$. 
\end{proposition}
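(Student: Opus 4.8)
The statement to prove is purely one about group structure: the hard classification---that every linear rank-metric-equivalence map is of the form $\vx \mapsto \lambda \vx A$ with $\lambda \in \F_{q^m}^*$ and $A \in \GL_l(\F_q)$---is already supplied by Berger's result recalled just before the proposition. So the plan is to invoke the first isomorphism theorem: I would construct a surjective group homomorphism $\Phi$ from $\F_{q^m}^* \times \GL_l(\F_q)$ onto $\LEquiv_{\RM}(\F_{q^m}^l)$ and identify its kernel with $N$. First I would, for each $(\lambda, A)$, define $\phi_{\lambda, A}(\vx) = \lambda \vx A$ and check that $\phi_{\lambda,A} \in \LEquiv_{\RM}(\F_{q^m}^l)$: it is visibly $\F_{q^m}$-linear and invertible (inverse $\phi_{\lambda^{-1}, A^{-1}}$), and it preserves rank weight because scalar multiplication by $\lambda \in \F_{q^m}^*$ is an $\F_q$-linear bijection of $\F_{q^m}$ (so it does not change the $\F_q$-dimension of the span of the coordinates of $\vx$) while right multiplication by $A \in \GL_l(\F_q)$ replaces those coordinates by invertible $\F_q$-linear combinations of themselves. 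Surjectivity of $(\lambda,A)\mapsto \phi_{\lambda,A}$ is then exactly Berger's characterization.

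\textbf{The homomorphism property.} Composing gives $(\phi_{\lambda_1, A_1} \circ \phi_{\lambda_2, A_2})(\vx) = \lambda_1\lambda_2\, \vx A_2 A_1$, so composition reverses the order of the $\GL_l(\F_q)$-factor. The clean remedy is to route the correspondence through the self-anti-isomorphism $A \mapsto A^\top$ of $\GL_l(\F_q)$: setting $\Phi(\lambda, A)(\vx) = \lambda \vx A^\top$ gives $\Phi(\lambda_1, A_1)\circ\Phi(\lambda_2,A_2) = \Phi(\lambda_1\lambda_2,\, A_1A_2)$ since $A_2^\top A_1^\top = (A_1A_2)^\top$, so $\Phi$ is an honest homomorphism. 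Because $A \mapsto A^\top$ is a bijection of $\GL_l(\F_q)$, this change alters neither the image (surjectivity is preserved) nor the isomorphism type. Equivalently, one may keep $\phi_{\lambda,A}$ and simply observe that $(\lambda,A)\mapsto\phi_{\lambda,A}$ is an anti-homomorphism, which suffices as $\GL_l(\F_q) \cong \GL_l(\F_q)^{\mathrm{op}}$.

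\textbf{The kernel.} I would compute $\ker \Phi$ directly. The map $\Phi(\lambda, A)$ is the identity iff $\lambda \vx A^\top = \vx$ for all $\vx$; evaluating on the standard basis vectors $\ve_1, \ldots, \ve_l$ forces $\lambda A^\top = I_l$, i.e.\ $A^\top = \lambda^{-1} I_l$, hence $A = \lambda^{-1}I_l$. Since the entries of $A$ lie in $\F_q$ whereas $\lambda^{-1}\in \F_{q^m}$, comparing a (nonzero) diagonal entry yields $\lambda^{-1}\in\F_q^*$, so $\lambda \in \F_q^*$. Therefore $\ker\Phi = \{(\lambda, \lambda^{-1}I_l)\,|\,\lambda\in\F_q^*\} = N$ (which is automatically normal, being a kernel; indeed it is central), and the first isomorphism theorem gives the stated isomorphism.

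\textbf{Main obstacle.} Since the exhaustiveness of the family $\phi_{\lambda,A}$ is given, essentially all remaining work is bookkeeping, and the single point demanding care is the order-reversal under composition. This is why I pass to the transpose (equivalently, the opposite group): a naive ``fix'' using $A^{-1}$ instead would make $\Phi$ a homomorphism but would produce the kernel $\{(\lambda,\lambda I_l)\,|\,\lambda\in\F_q^*\}$, which, although isomorphic to $N$ as an abstract subgroup, does not match the $N$ written in the statement. Getting the convention right is thus the only genuinely delicate step.
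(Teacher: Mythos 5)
Your proof is correct, and it follows the same overall strategy as the paper: take Berger's classification as given, build a surjection from $\F_{q^m}^* \times \GL_l(\F_q)$ onto $\LEquiv_{\RM}(\F_{q^m}^l)$, and apply the first isomorphism theorem. The one place where you genuinely diverge is the identification of the kernel, which is the crux of the proposition (it is precisely the step at which Berger's own claimed description of the quotient fails). The paper argues indirectly: it first computes $|\LEquiv_{\RM}(\F_{q^m}^l)|$ as the order of the product of the scalar subgroup with $\GL_l(\F_q)$ inside $\GL_l(\F_{q^m})$, observes that the kernel \emph{contains} $N$, and then concludes equality by comparing $|\left(\F_{q^m}^* \times \GL_l(\F_q)\right)/N|$ with the order already computed. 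You instead compute the kernel directly by evaluating on the standard basis vectors, which forces $A = \lambda^{-1}I_l$ and, since $A$ has entries in $\F_q$, forces $\lambda \in \F_q^*$. Your route is more self-contained (no order counts needed) and pins down the kernel exactly rather than by a cardinality squeeze; the paper's route has the side benefit of producing the explicit order of the group. You are also more careful than the paper about the composition-order issue --- the paper simply asserts a ``natural homomorphism'' without addressing that $(\lambda,A)\mapsto(\vx\mapsto\lambda\vx A)$ reverses products in the $\GL_l(\F_q)$ factor; your transpose fix (or the observation that an anti-isomorphism suffices) is a legitimate and worthwhile clarification, and your closing remark that the naive $A\mapsto A^{-1}$ repair would change the subgroup written as $N$ is exactly the right convention check.
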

\begin{proof}
In \cite{Berger03}, Berger proves that the linear rank-metric equivalence maps can be represented as products of $\F_{q^m}^*$-scalar matrices and invertible matrices over $\F_q$, and so 
\[
\LEquiv_{\RM}(\F_{q^m}^l) = \{\alpha I_l~|~ \alpha \in \F_{q^m}^*\}\cdot \GL_l(\F_q) \subseteq \GL_l(\F_{q^m}).
\]
 Since the scalar matrices form a normal subgroup, the product of that subgroup with $\GL_l(\F_q)$ is well-defined and does in fact form a group.  Thus, $$|\LEquiv_{\RM}(\F_{q^m}^l)| = \frac{| \{\alpha I_l~|~ \alpha \in \F_{q^m}^*\}| |\GL_l(\F_q)|}{ |  \{\alpha I_l~|~ \alpha \in \F_{q^m}^*\}\cap\GL_l(\F_q)|}= \frac{(q^m-1)\prod_{i=0}^{l-1}(q^l-q^i)}{q-1}.$$  Furthermore, since the scalar matrices are in the center of $\GL_l(\F_{q^m})$, it is possible to represent each equivalence map as a single scalar multiplication followed by multiplication by a single matrix in $\GL_l(\F_{q})$.  Thus, there is a natural homomorphism from the direct product $\F_{q^m}^* \times \GL_l(\F_q)$ onto $\LEquiv_{\RM}(\F_{q^m}^l)$.  Since it is possible to represent multiplication by an $\F_q^*$-scalar in two ways, namely via an $\F_{q^m}^*$-scalar or by an $\F_q^*$-scalar matrix, we see that the kernel of this map contains the subgroup $N=\{(\lambda, \lambda^{-1}I_l)~|~ \lambda \in \F_q^*\}$.  Finally, since $|N|=q-1$, we have 
 \[
 |\left(\F_{q^m}^* \times \GL_l(\F_q)\right) /N| =  \frac{(q^m-1)\prod_{i=0}^{l-1}(q^l-q^i)}{q-1} = |\LEquiv_{\RM}(\F_{q^m}^l)|, 
 \]
  and so $N$ must equal the kernel of the map.  Thus, by the first isomorphism theorem, the result holds. 
\end{proof}
\begin{remark}
By the Proposition \ref{LEquivRM}, each linear rank-metric equivalence map corresponds to a coset of the form $(\alpha, L) \cdot N$ for some $\alpha \in \F_{q^m}^*$ and $L \in \GL_l(\F_q)$.  For ease of notation, we will henceforth write $[\alpha, L]$ to denote the coset $(\alpha, L) \cdot N$.  
\end{remark}

In Proposition \ref{LEquivRM}, we show that $\LEquiv_{\RM}(\F_{q^m}^l)\cong \left(\F_{q^m}^* \times \GL_l(\F_q)\right) /N$, while \cite{Berger03} previously asserted that $\LEquiv \cong (\F_{q^m}^* / \F_q^*) \times \GL_l(\F_q)$ by taking the direct product $\F_{q^m}^* \times \GL_l(\F_q)$ and modding out by the intersection of those groups.  However, this intersection does not give the maps that correspond to the identity map in $\LEquiv_{\RM}(\F_{q^m}^l)$, and so there is a flaw in the proof.  In particular, there are a number of values of $q$, $l$, and $m$ for which the two groups, $ \left(\F_{q^m}^* \times \GL_l(\F_q)\right)/N$ and $(\F_{q^m}^* / \F_q^*) \times \GL_l(\F_q)$, are not isomorphic.  Example \ref{BergerCounterexample} gives some insight into why the groups are not isomorphic in general.  

\begin{example}\label{BergerCounterexample}
Set $q=3$, $l=2$, and $m=4$.  Let $\alpha$ be a primitive element for $\F_{q^m}=\F_{81}$.  Consider the element $[\alpha, I_2] \in \left(\F_{81}^\cdot  \times \GL_2(\F_3)\right)/N$.  Since $([\alpha, I_2])^i = [\alpha^i, I_2]$ and the only element in $N$ with $I_2$ as its second coordinate is $(1, I_2)$, we see that $(\alpha^i,I_2) \in N$ precisely when $\alpha^i =1$.  Since $\alpha$ is a primitive element, its order is $q^m-1=80$, and so the order of $[\alpha, I_2]$ is also $80$.  

In contrast, we will show that the group $(\F_{81}^* / \F_3^*) \times \GL_2(\F_3)$ has no elements of order 80.  To see this, recall that the order of an ordered pair in a direct product equals the least common multiple of the orders of each entry of the ordered pair, and so $([\beta], B) \in (\F_{81}^* / \F_3^*) \times \GL_2(\F_3)$ has order equal to the least common multiple of the orders of $[\beta]=\beta \cdot \F_3^* \in (\F_{81}^* / \F_3^*) $ and $B \in \GL_2(\F_3)$.  The orders of elements in $(\F_{81}^* / \F_3^*)$ must divide the order of the group, which is $(81-1)/(3-1) = 40$, while the orders of elements in $ \GL_2(\F_3)$ must divide the order of the group, which is $\prod_{i=0}^{2-1}(3^2-3^i) = 48$.  Based on these order constraints, for $([\beta], B)$ to have order 80, we see that $[\beta]$ must have order 5 and $B$ must have order 16.  Using the computer algebra system Magma, we may check the order of each of the 48 elements of $\GL_2(\F_3)$, and we find that there is no element of order 16.  Thus, $(\F_{81}^* / \F_3^*) \times \GL_2(\F_3)$ has no element of order 80, and so it cannot be isomorphic to $\left(\F_{81}^* \times \GL_2(\F_3)\right) /N$ since isomorphic groups have the same number of elements of a given order.
%
\end{example}

More generally, we are interested in the collection of semi-linear rank-metric equivalence maps and their group structure.  These maps were previously investigated in \cite{Berger03}, and their structure was characterized:
\begin{proposition}[\cite{Berger03}]\label{SLEquivRM}
The group of semi-linear rank-metric-equivalence maps on $\F_{q^m}^l$ satisfies 
\[
\SLEquiv_{\RM}(\F_{q^m}^l)\cong \LEquiv_{\RM}(\F_{q^m}^l)\rtimes \Gal(\F_{q^m}/\F_p).
\]  
\end{proposition}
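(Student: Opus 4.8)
The plan is to recognize $\SLEquiv_{\RM}(\F_{q^m}^l)$ as an internal semidirect product sitting inside the full semi-linear group $\GammaL_l(\F_{q^m}) = \GL_l(\F_{q^m}) \rtimes \Gal(\F_{q^m}/\F_p)$, writing each semi-linear map as a pair $(A;\gamma)$ with the multiplication $(A_1;\gamma_1)(A_2;\gamma_2) = (A_1 A_2^{(\gamma_1^{-1})};\gamma_1\gamma_2)$ from the preceding discussion. I would take $N = \LEquiv_{\RM}(\F_{q^m}^l)$, viewed as the subgroup of maps $(A;\mathrm{id})$ whose linear part $A$ is a rank-preserving $\F_{q^m}$-linear matrix, and $H = \{(I_l;\gamma) \mid \gamma \in \Gal(\F_{q^m}/\F_p)\}$, the subgroup of coordinate-wise field automorphisms. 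The goal is then to verify the four hypotheses of the recognition theorem for internal semidirect products: $N,H \leq \SLEquiv_{\RM}(\F_{q^m}^l)$, $N\cap H = \{\mathrm{id}\}$, $N\cdot H = \SLEquiv_{\RM}(\F_{q^m}^l)$, and $N \trianglelefteq \SLEquiv_{\RM}(\F_{q^m}^l)$.

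Three of these are quick. That $H \leq \SLEquiv_{\RM}(\F_{q^m}^l)$ and $H \cong \Gal(\F_{q^m}/\F_p)$ follows once I check that a coordinate-wise $\gamma$ preserves rank weight: since $\F_q$ is the unique subfield of $\F_{q^m}$ of its order, $\gamma$ maps $\F_q$ onto $\F_q$, so $\gamma$ restricts to an element of $\Aut(\F_q)$ and therefore sends $\F_q$-linearly independent sets to $\F_q$-linearly independent sets, preserving $\dim_{\F_q}\mathrm{span}_{\F_q}\{x_1,\dots,x_l\}$; closure of $H$ under the group law is immediate from the multiplication formula with $A_1=A_2=I_l$. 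The intersection $N\cap H = \{\mathrm{id}\}$ follows from uniqueness of the $(A;\gamma)$ decomposition in $\GammaL_l(\F_{q^m})$, since an element in both subgroups must have $A=I_l$ and $\gamma=\mathrm{id}$. The factorization $N\cdot H = \SLEquiv_{\RM}(\F_{q^m}^l)$ is exactly Berger's structural result recalled above, namely that every semi-linear rank-metric-equivalence map is a linear rank-metric-equivalence map composed with a coordinate-wise automorphism.

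The substantive step, and the one I expect to be the main obstacle, is normality of $N$. Since $\SLEquiv_{\RM}(\F_{q^m}^l) = N\cdot H$ and $N$ normalizes itself, it suffices to check that $H$ normalizes $N$, i.e. to conjugate a generator $(A;\mathrm{id})$ of $N$ by $(I_l;\gamma) \in H$. Using the multiplication law and $(I_l;\gamma)^{-1} = (I_l;\gamma^{-1})$, a direct computation gives $(I_l;\gamma)(A;\mathrm{id})(I_l;\gamma^{-1}) = (A^{\gamma^{-1}};\mathrm{id})$, where $A^{\gamma^{-1}}$ denotes the entrywise image of $A$ under $\gamma^{-1}$. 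It then remains to see that $A^{\gamma^{-1}}$ is again a linear rank-metric-equivalence matrix. By Proposition \ref{LEquivRM}, $A = \alpha L$ with $\alpha \in \F_{q^m}^*$ and $L \in \GL_l(\F_q)$; applying $\gamma^{-1}$ entrywise yields $A^{\gamma^{-1}} = \alpha^{\gamma^{-1}} L^{\gamma^{-1}}$, and since $\gamma^{-1}$ restricts to $\Aut(\F_q)$ we have $\alpha^{\gamma^{-1}} \in \F_{q^m}^*$ and $L^{\gamma^{-1}} \in \GL_l(\F_q)$ (its determinant is $(\det L)^{\gamma^{-1}} \neq 0$). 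Hence $A^{\gamma^{-1}}$ has the same scalar-times-$\GL_l(\F_q)$ form and lies in $N$, establishing $N \trianglelefteq \SLEquiv_{\RM}(\F_{q^m}^l)$.

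With all four hypotheses in hand, the recognition theorem gives $\SLEquiv_{\RM}(\F_{q^m}^l) = N \rtimes H \cong \LEquiv_{\RM}(\F_{q^m}^l) \rtimes \Gal(\F_{q^m}/\F_p)$, with the action being conjugation as computed above, i.e. the entrywise Galois action on the equivalence matrices. The only place where genuine care is needed is this conjugation computation together with the verification that the class of equivalence matrices is stable under the entrywise Galois action; everything else is either bookkeeping or a direct appeal to the results already established.
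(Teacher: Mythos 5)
Your argument is correct, but note that the paper does not actually supply a proof of Proposition \ref{SLEquivRM}: the result is quoted directly from \cite{Berger03}, so there is no in-paper proof to compare against. The closest internal benchmark is the paper's proof of the matrix analogue, Proposition \ref{SLEquivMat}, and your proof follows essentially that template: you show the coordinate-wise Galois automorphisms preserve rank weight (your argument via stability of $\F_q$-linear independence is sound), you get one containment from this together with Proposition \ref{LEquivRM}, and you get the factorization $N\cdot H = \SLEquiv_{\RM}(\F_{q^m}^l)$ from Berger's structural result --- which could equally be rederived, as the paper does for matrices, by composing an arbitrary $(A;\gamma)$ with $(I_l;\gamma^{-1})$ and observing the result is linear. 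Where you go beyond the paper's template is the explicit normality check: the computation $(I_l;\gamma)(A;\mathrm{id})(I_l;\gamma^{-1}) = (A^{\gamma^{-1}};\mathrm{id})$ is correct, and the observation that $A=\alpha L$ passes to $\alpha^{\gamma^{-1}}L^{\gamma^{-1}}$ with $\alpha^{\gamma^{-1}}\in\F_{q^m}^*$ and $L^{\gamma^{-1}}\in\GL_l(\F_q)$ (because $\gamma$ stabilizes $\F_q$) is exactly the stability needed for $\LEquiv_{\RM}(\F_{q^m}^l)\rtimes\Gal(\F_{q^m}/\F_p)$ to make sense as a subgroup of $\GammaL_l(\F_{q^m})$. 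The paper's proof of Proposition \ref{SLEquivMat} leaves the corresponding stability implicit, so making it explicit is a small but genuine improvement rather than a detour; the intersection argument $N\cap H=\{\mathrm{id}\}$ is also handled correctly.
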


  \subsection{Matrix Code Equivalence Maps}\label{MatEquivMaps}
   In analogy with the notion of block-equivalence and rank-metric-equivalence, we now define equivalence for matrix codes.
    \begin{definition}\label{MatEquivalence}
 An invertible map $f:\Matlm \to \Matlm$ is a \emph{linear matrix-equivalence map} if $f$ is $\F_q$-linear and preserves rank weight.   The collection of linear matrix-equivalence maps is denoted by $\LEquiv_{\Mat}(\Matlm)$. Similarly, an invertible map $f:\Matlm \to \Matlm$ is a \emph{semi-linear matrix-equivalence map} if $f$ is $\F_q$-semi-linear and preserves rank weight.   The collection of semi-linear rank-metric-equivalence maps is denoted by $\SLEquiv_{\Mat}(\Matlm)$.  Two matrix codes $C, \tC \subseteq \Matlm$ are \emph{(semi-)linearly matrix-equivalent} if there exists a (semi-)linear matrix-equivalence map $f$ such that $\tC= f(C)$.
  \end{definition}
\begin{remark}
A rank-metric-equivalence map must be (semi-)linear with respect to the field over which the code is defined, namely the field $\F_{q^m}$, while matrix-equivalence maps are only required to be (semi-)linear with respect to the field $\F_q$, even though a common construction for matrix codes is the expansion of rank-metric codes that are linear over $\F_{q^m}$.  
\end{remark}
    \begin{remark}
Since the composition of two linear rank weight-preserving maps is also a linear rank weight-preserving map, the collection of linear equivalence maps forms a group under composition.  An analogous statement holds in the semi-linear case. 
  \end{remark}

To describe the collections of linear and semi-linear matrix-equivalence maps, we must first determine which maps preserve rank weight.   Proposition \ref{rankpreserving} below does precisely this for the case of linear maps.  To simplify the proof of this proposition, we begin with a lemma.

\begin{lemma}\label{rank1}
Let $\ve_i^{(n)}$ denote the $i^\text{th}$ standard basis vector of $\F_q^n$. 
For any $\vx \in \F_q^l$ and $\vy \in \F_q^m$ with $\vx, \vy \neq \vzero$, 
 if 
 \[
\rank\left( \ve_i^{(l)\top}\ve_j^{(m)} + \vx^\top\vy\right) =1,
 \]
then $\vx=\lambda \ve_i^{(l)}$ or $\vy = \lambda \ve_j^{(m)}$ for some $\lambda \in \F_q^*$.  
\end{lemma}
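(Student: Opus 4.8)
The plan is to recognize both summands as rank-one outer products and to reduce the claim to a general fact about when the sum of two rank-one matrices again has rank at most one. Write $A = \ve_i^{(l)\top}\ve_j^{(m)}$ and $B = \vx^\top\vy$; since $\vx,\vy \neq \vzero$, both $A$ and $B$ have rank exactly $1$. First I would establish the following characterization: for nonzero row vectors $\va,\vc \in \F_q^l$ and $\vb,\vd \in \F_q^m$, the matrix $\va^\top\vb + \vc^\top\vd$ has rank at most $1$ if and only if $\{\va,\vc\}$ is linearly dependent or $\{\vb,\vd\}$ is linearly dependent. The hypothesis of the lemma supplies rank equal to $1$, hence at most $1$, so applying this characterization with $\va = \ve_i^{(l)}$, $\vb = \ve_j^{(m)}$, $\vc = \vx$, $\vd = \vy$ will immediately force either $\{\ve_i^{(l)},\vx\}$ or $\{\ve_j^{(m)},\vy\}$ to be linearly dependent. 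Because $\vx,\vy$ and $\ve_i^{(l)},\ve_j^{(m)}$ are all nonzero, such a dependence yields $\vx = \lambda\ve_i^{(l)}$ or $\vy = \lambda\ve_j^{(m)}$ for some $\lambda \in \F_q^*$, which is exactly the desired conclusion.

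To prove the characterization, I would factor the sum as a product of an $l \times 2$ matrix and a $2 \times m$ matrix: $\va^\top\vb + \vc^\top\vd = UV$, where $U = [\,\va^\top \mid \vc^\top\,]$ has columns $\va^\top,\vc^\top$ and $V$ has rows $\vb,\vd$. The easy direction is transparent: if $\vc = \mu\va$ then $UV = \va^\top(\vb + \mu\vd)$ has rank at most $1$, and symmetrically if $\vd = \mu\vb$. For the converse I would argue by contrapositive, assuming $\{\va,\vc\}$ and $\{\vb,\vd\}$ are each linearly independent, so that both $U$ and $V$ have rank $2$, and then show $\rank(UV) = 2$.

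This last step is the main obstacle and the only place requiring care. The argument I have in mind computes the column space of $UV$ as the image under $U$ of the column space of $V$: since $V$ has rank $2$, its columns span all of $\F_q^2$, and since $U$ has rank $2$, the linear map $\vw \mapsto U\vw$ is injective on $\F_q^2$, so it carries this $2$-dimensional space to a $2$-dimensional subspace of $\F_q^l$. Hence $\rank(UV) = 2 > 1$, contradicting $\rank(UV) \le 1$ and completing the characterization. The remaining bookkeeping — translating linear dependence of two nonzero vectors into an explicit nonzero scalar $\lambda$ — is routine, so the lemma then follows at once.
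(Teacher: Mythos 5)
Your proof is correct, and it takes a genuinely different route from the paper's. The paper exploits the special shape of the first summand: it writes out $\ve_i^{(l)\top}\ve_j^{(m)} + \vx^\top\vy$ row by row (the $r$-th row is $x_r\vy$ for $r\neq i$ and the $i$-th row is $\ve_j^{(m)}+x_i\vy$), splits into the cases $x_i=0$ and $x_i\neq 0$, and uses the fact that all nonzero rows of a rank-one matrix are proportional to force either $\vy=\lambda\ve_j^{(m)}$ or $x_r=0$ for all $r\neq i$. You instead prove the stronger, fully general statement that $\va^\top\vb+\vc^\top\vd$ has rank at most one only if $\{\va,\vc\}$ or $\{\vb,\vd\}$ is dependent, via the factorization $UV$ with $U\in\F_q^{l\times 2}$ and $V\in\F_q^{2\times m}$; the key step, that $\rank(UV)=2$ when both factors have rank $2$, is justified correctly by noting that the column space of $UV$ is the image under the injective map $U$ of $\operatorname{colspan}(V)=\F_q^2$ (and the degenerate cases $l=1$ or $m=1$ make the dependence automatic, so nothing is lost there). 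Your argument is cleaner and reusable --- it never uses that one summand is a standard matrix unit --- while the paper's hands-on computation has the minor advantage of producing the scalar $\lambda$ explicitly in terms of the entries of $\vx$ and $\vy$. Either proof serves the role the lemma plays in Proposition \ref{rankpreserving}.
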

\begin{proof} 
Observe that 
\[
\ve_i^{(l)\top}\ve_j^{(m)} + \vx^\top\vy = {\scriptsize \begin{bmatrix} x_1 \vy \\ \vdots \\ \ve_j^{(m)}+x_i\vy \\ \vdots \\ x_l\vy \end{bmatrix}}.
\]
Suppose first that $x_i=0$, so that the $i^\text{th}$ row of $\ve_i^{(l)\top}\ve_j^{(m)} + \vx^\top\vy$ is just $\ve_j^{(m)}$.  Since $\vx \neq \vzero$ by hypothesis, there exists some $r\neq i$ such that $x_r\neq 0$, and so the $r^\text{th}$ row of $\ve_i^{(l)\top}\ve_j^{(m)} + \vx^\top\vy$ is $x_r\vy \neq \vzero$.  Thus, the rank 1 matrix  $\ve_i^{(l)\top}\ve_j^{(m)} + \vx^\top\vy$ has at least two non-zero rows, and there exists some $\lambda_r \neq 0$ such that 
\[
x_r\vy = \lambda_r \ve_j^{(m)}.
\]
Thus, $\vy=\frac{\lambda_r}{x_r}\ve_j^{(m)}$, i.e\ $\vy=\lambda \ve_j^{(m)}$ with $\lambda = \frac{\lambda_r}{x_r} \in \F_q^*$.

Now assume $x_i \neq 0$ and $\vy \neq \lambda \ve_j^{(m)}$ for any $\lambda \in \F_q^*$.  Then $\ve_j^{(m)} +x_i\vy \neq \vzero$, and so all other rows of $\ve_j^{(l)\top}\ve_j^{(m)} + \vx^\top\vy$ must be scalar multiples of $\ve_j^{(m)} +x_i\vy$.  In particular, for each $1\leq r \leq l$ with $r \neq i$, there exists some $\lambda_r \in \F_q$ such that 
\[
x_r \vy = \lambda_r\left(\ve_j^{(m)} +x_i\vy\right),
\]
and so $(x_r-\lambda_r x_i)\vy = \lambda_r \ve_j^{(m)}$.  Since we assumed $\vy \neq \lambda \ve_j^{(m)}$ for any $\lambda \in \F_q^*$, this equality can only hold if $\lambda_r=0$, which implies that $x_r=0$ for all $r \neq i$.  Hence, $\vx =\lambda \ve_i^{(l)}$ for some $\lambda \in \F_q^*$.  

\end{proof}

\begin{proposition}\label{rankpreserving} 
Let $f:\Matlm \to \Matlm$ be an invertible linear rank-preserving map.  Then there exist $L \in \GL_{l}(\F_q)$ and $M \in \GL_m(\F_q)$ such that either
\begin{itemize}
\item[] $f(A)=LAM$ for all $A \in \Matlm$ or
\item[] $f(A)=LA^\text{T} M$ for all $A \in \Matlm$,
\end{itemize}
where the latter case can only occur if $l = m$.
\end{proposition}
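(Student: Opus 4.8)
The plan is to recognize this as a linear-preserver-of-rank problem: $f$ is an $\F_q$-linear bijection with $\rank f(A)=\rank A$ for all $A$, so in particular $f$ restricts to a bijection of the set of rank-one matrices onto itself (rank one being the smallest nonzero rank), as does $f^{-1}$. Every rank-one matrix is an outer product $\vx^\top\vy$ with nonzero $\vx\in\F_q^l$, $\vy\in\F_q^m$, so the entire argument is really about how $f$ permutes these outer products.

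First I would record the two families of maximal rank-one subspaces: for fixed nonzero $\vx$ the row pencil $R_{\vx}=\{\vx^\top\vy:\vy\in\F_q^m\}$ (dimension $m$), and for fixed nonzero $\vy$ the column pencil $C_{\vy}=\{\vx^\top\vy:\vx\in\F_q^l\}$ (dimension $l$); these are exactly the subspaces in which every nonzero element has rank one and that are maximal with this property. The role of Lemma \ref{rank1} (together with its easy converse) is to pin down this pencil structure at the level of the standard basis matrices $\ve_i^{(l)\top}\ve_j^{(m)}$: it says a rank-one $\vx^\top\vy$ keeps $\ve_i^{(l)\top}\ve_j^{(m)}+\vx^\top\vy$ of rank one precisely when $\vx\parallel\ve_i^{(l)}$ or $\vy\parallel\ve_j^{(m)}$, i.e. precisely when $\vx^\top\vy$ lies in $R_{\ve_i^{(l)}}\cup C_{\ve_j^{(m)}}$. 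More generally two rank-one matrices $A,B$ satisfy $\rank(A+B)\le 1$ iff they share a column direction or a row direction, and since $f$ is additive and rank-preserving this ``compatibility'' relation is preserved by $f$.

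Next I would show that $f$ permutes the collection of maximal rank-one subspaces: being a rank-preserving linear bijection, $f$ maps such a subspace to another of the same dimension. Hence $f$ either fixes the row-pencil family $\{R_{\vx}\}$ and the column-pencil family $\{C_{\vy}\}$ setwise, or it swaps the two families; and because $\dim R_{\vx}=m$ while $\dim C_{\vy}=l$, a swap can occur only when $l=m$, which is the source of the transpose alternative. After composing with transposition in the swapped $l=m$ case, I may assume $f$ sends each $R_{\vx}$ to some $R_{\va(\vx)}$ and each $C_{\vy}$ to some $C_{\vb(\vy)}$. Applying this to the coordinate pencils through the $\ve_i^{(l)\top}\ve_j^{(m)}$ and using $f(\ve_i^{(l)\top}\ve_j^{(m)})\in R_{\va_i}\cap C_{\vb_j}$ gives $f(\ve_i^{(l)\top}\ve_j^{(m)})=c_{ij}\,\va_i^\top\vb_j$ for nonzero $\va_i,\vb_j$ and scalars $c_{ij}\in\F_q^*$.

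Finally I would extract $L$ and $M$. The induced maps $\vx\mapsto\va(\vx)$ and $\vy\mapsto\vb(\vy)$ arise from $\F_q$-linear bijections, yielding candidate matrices $L\in\GL_l(\F_q)$ and $M\in\GL_m(\F_q)$; feeding the rank-one matrix $(\ve_i^{(l)}+\ve_{i'}^{(l)})^\top(\ve_j^{(m)}+\ve_{j'}^{(m)})$ through $f$ and demanding its image again have rank one forces the $2\times 2$ relation $c_{ij}c_{i'j'}=c_{ij'}c_{i'j}$, so the scalar array factors as $c_{ij}=s_it_j$, which can be absorbed into $\va_i$ and $\vb_j$. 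One then checks on the basis $\{\ve_i^{(l)\top}\ve_j^{(m)}\}$ that the resulting map is $A\mapsto LAM$ (or $A\mapsto LA^\top M$ in the swapped case) and extends to all of $\Matlm$ by $\F_q$-linearity. I expect the main obstacle to be the global step: pairwise compatibility among rank-one matrices only guarantees a shared row \emph{or} column direction locally, so the work is to upgrade this into a consistent global assignment and to rule out ``mixed'' behavior (some rows behaving like rows, others like columns). Ruling this out — and showing that a genuine swap of the two families forces $l=m$ — is exactly what the maximality and dimension arguments, anchored by Lemma \ref{rank1}, are designed to accomplish.
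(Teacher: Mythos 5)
Your proposal is correct in substance, but it organizes the argument quite differently from the paper, in the style of the classical linear-preserver literature. The paper never introduces the maximal rank-one subspaces explicitly: it works entirely with the standard basis matrices $E_{ij}$, repeatedly composing $f$ with explicit invertible matrices ($X^{-1}$ and $Y^{-1}$ built from the images of the diagonal $E_{ii}$, then $Z^{-1}$ from the images of $E_{ik}$ for $k>l$, then diagonal matrices $D_1^{-1},D_2^{-1}$ absorbing the scalars $\lambda_{ij}$) until $f$ fixes every $E_{ij}$ or transposes them all; the ``preserving versus transposing'' dichotomy is propagated entry by entry through repeated applications of Lemma \ref{rank1}, and the restriction $l=m$ for the transpose case emerges from a dimension count on the vectors $\vz_k,\vw_k$. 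Your route instead establishes up front that the maximal subspaces of rank-$\le 1$ matrices are exactly the row pencils $R_{\vx}$ (dimension $m$) and column pencils $C_{\vy}$ (dimension $l$), that $f$ permutes them, and that the two families are preserved or swapped globally --- with the swap forced to $l=m$ by dimension, and mixing excluded when $l=m$ by the intersection pattern (distinct pencils of the same type meet trivially, while a row and a column pencil meet in a line). This buys a cleaner conceptual explanation of where the transpose alternative comes from and why it needs $l=m$, and it is the form of the argument that generalizes (to adjacency preservers, semilinear maps, etc.); the paper's version is more elementary and self-contained but pays for it with a longer case analysis. Your $2\times 2$ identity $c_{ij}c_{i'j'}=c_{ij'}c_{i'j}$ is precisely the pencil-level form of the paper's computation with the rank-one all-ones matrix $\sum_{i,j}E_{ij}$, and both arguments ultimately rest on the same core fact (Lemma \ref{rank1} and its converse) about when a sum of two rank-one matrices has rank one. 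One small caution: the phrase ``the induced maps $\vx\mapsto\va(\vx)$ and $\vy\mapsto\vb(\vy)$ arise from $\F_q$-linear bijections'' is, as stated, an appeal to something like the fundamental theorem of projective geometry, which would a priori only give semilinearity; you should instead simply define $L$ and $M$ by their columns $s_i\va_i^\top$ and rows $t_j\vb_j$ (invertibility following because $f$ maps the $l$-dimensional pencil $C_{\ve_j^{(m)}}$ onto $C_{\vb_j}$, forcing the $\va_i$ to be independent, and symmetrically for the $\vb_j$) and verify $f(E_{ij})=LE_{ij}M$ on the basis, which is all the final step requires.
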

\begin{proof}
Without loss of generality, we may assume $l \leq m$; if $l>m$ we can apply the transpose, which is a rank-preserving map, and fall into the other case.  Let $f:\Matlm \to \Matlm$ be an invertible, linear, rank-preserving map, and let $E_{ij}=\ve_j^{(l)\top}\ve_j^{(m)}$ denote the $l \times m$ matrix of zeros with a 1 in the $(i,j)^\text{th}$ entry.  Since $E_{ij}$ has rank 1 and $f$ is rank-preserving,
$f(E_{ij})$ must also have rank 1.  Hence, $f(E_{ij})=\vx^\top \vy$ for
some $\vx\in \F_q^l$,  $\vy \in \F_q^m$.  For $1 \leq i \leq l$, let $\vx_i\in \F_q^l$, 
$\vy_i \in \F_q^m$ be such that $f(E_{ii}) = \vx_i^\top \vy_i.$  

Set $X:= {\footnotesize \begin{bmatrix} \vx_1^\top & \cdots & \vx_l^\top \end{bmatrix}} \in \F_q^{l \times l}$.  We claim that $X \in \GL_l(\F_q)$.  For any $\vv \in \F_q^m$, $\lambda_i(\vv):= \vy_i\vv^\top$ is an $\F_q$-scalar for $1 \leq i \leq l$, and so 
\begin{eqnarray*}
\Span\{\vx_i^\top~|~ 1 \leq i \leq l\} &\supseteq& \left\{\sum_{i=1}^l
 \vx_i^\top \lambda_i(\vv) ~|~ \vv \in \F_q^m \right\}= \left\{ \sum_{i=1}^l\vx_i^\top (\vy_i\vv^\top )  ~|~ \vv \in \F_q^m \right\}\\
&=& \left\{ \left(\sum_{i=1}^l f(E_{ii})\right)\vv^\top~|~ \vv \in \F_q^m \right\}=\left \{ f([I_l~|~0])\vv^\top~|~ \vv \in \F_q^m \right\}.
\end{eqnarray*}
Since $f$ is rank-preserving and each $f([I_l~|~0])\vv^\top$ is a linear combination of the columns of $f([I_l~|~0])$, we have 
\[
\dim \Span \{ f([I_l~|~0])\vv^\top~|~\vv \in \F_q^m\} = \dim \operatorname{colspan} \left( f([I_l~|~0]) \right) = \rank f([I_l~|~0])=l.
\]
Thus, since $\Span \{\vx_i^\top~|~ 1 \leq i \leq l\}\supseteq\Span \{ f([I_l~|~0])\vv^\top~|~\vv \in \F_q^m\}$ and $\dim \Span \{\vx_i~|~ 1 \leq i \leq l\} = \dim \Span \{\vx_i^\top~|~ 1 \leq i \leq l\}$, we have
\[
\dim \Span \{\vx_i~|~ 1 \leq i \leq l\} \geq \dim \Span \{ f([I_l~|~0])\vv^\top~|~\vv \in \F_q^m\} = l.
\]
Hence $\{\vx_i~|~ 1 \leq i \leq l\}$ is a linearly independent set, and so $X \in \GL_l(\F_q)$.  

Since $f': \Matml \to \Matml$ defined by $f'(A)=\left(f(A^\top )\right)^\top$ is also a rank-preserving map, and $f'(E_{ii})=\vy_i^\top\vx_i$, we may apply a similar argument to $f'$ to show that $\{\vy_i~|~1 \leq i \leq l\}$ is also a linearly independent set. Thus, if $l=m$, then $\{\vy_i~|~1 \leq i \leq l\}$ forms a basis for $\F_q^m$; if $l <m$ we arbitrarily extend $\{\vy_i~|~1 \leq i \leq l\}$ with vectors $\vy_{l+1} \ldots \vy_m$ to form a basis for $\F_q^m$, so that the matrix $Y:= {\scriptsize \begin{bmatrix} \vy_1 \\ \vdots \\ \vy_m\end{bmatrix}}$  is in $\GL_m(\F_q)$.  

By the definition of $X$ and $Y$, we see that $f(E_{ii}) = \vx_i^\top \vy_i = X E_{ii} Y$ for each $1 \leq i \leq l$.  Since the map $\hf: \Matlm \to \Matlm$ defined by $\hf(A) = X^{-1}f(A)Y^{-1}$ is also an invertible, linear, rank-preserving map and $\hf(E_{ii})= X^{-1}(X E_{ii} Y)Y^{-1}= E_{ii}$, we may assume, by replacing $f$ by $\hf$ if necessary, that the invertible, linear, rank-preserving map $f$ fixes each $E_{ii}$.



Since the map $f$ is completely determined by its action on $\{E_{ij} ~|~ 1 \leq i \leq l,~ 1\leq j \leq m\}$, all that remains is to understand $f(E_{ij})$ for all $i\neq j$.  First we will consider $1\leq i,j \leq l$ with $i \neq j$.  
Recall from above, that for each $i$ and $j$, there exist $\vx \in \F_q^l$ and $\vy \in \F_q^m$ such that $f(E_{ij})=\vx^\top\vy$.  Since $f$ is linear and rank-preserving and $f(E_{ii})=E_{ii}$, we have that
\[
f(E_{ii}+E_{ij}) = E_{ii} + f(E_{ij}) = \ve_i^{(l)\top}\ve_i^{(m)} + \vx^\top\vy
\]
must have rank 1.  Thus, by Lemma \ref{rank1}, $\vx= \lambda \ve_i^{(l)}$ or $\vy=\lambda \ve_i^{(m)}$ for some $\lambda \in \F_q^*$.  Similarly, since $f(E_{jj}+E_{ij})$ has rank 1, we also have that $\vx = \lambda' \ve_j^{(l)}$ or $\vy=\lambda' \ve_j^{(m)}$ for some $\lambda' \in \F_q^*$. Hence for $1 \leq i,j \leq l$ with $i \neq j$, we have
\[
\begin{array}{lll}
f(E_{ij}) = \lambda \lambda' \ve_i^{(l)\top}\ve_j^{(m)} = \lambda_{ij} E_{ij} & \text{or} &f(E_{ij}) = \lambda \lambda' \ve_j^{(l)\top}\ve_i^{(m)} = \lambda_{ij} E_{ji}
\end{array}
\]
where $\lambda_{ij} = \lambda \lambda'$.  

For $1 \leq i,j \leq l$ with $i \neq j$, we will say that $f$ is \emph{$ij$-preserving} if $f(E_{ij}) =\lambda_{ij}E_{ij}$; otherwise, $f(E_{ij})=\lambda_{ij}E_{ji}$ and we will say $f$
is \emph{$ij$-transposing}.  Since $f$ is also either $ik$-preserving or $ik$-transposing for each $1 \leq k \leq l$, we see that if $f$ is $ij$-preserving then 
\[
f(E_{ij} + E_{ik}) =\left\{\begin{array}{ll} \lambda_{ij} \ve_i^{(l)\top}\ve_j^{(m)} +  \lambda_{ik} \ve_i^{(l)\top}\ve_k^{(m)} & \text{if } f \text{ is } ik \text{-preserving}\\
\lambda_{ij} \ve_i^{(l)\top}\ve_j^{(m)} +  \lambda_{ik} \ve_k^{(l)\top}\ve_i^{(m)} & \text{if } f \text{ is } ik \text{-transposing} \end{array} \right.
\]
and so since $E_{ij} + E_{ik}$ has rank 1 and $f$ is rank-preserving we see from Lemma \ref{rank1} that $f(E_{ik})$ must equal $ \lambda_{ik} \ve_i^{(l)\top}\ve_k^{(m)}$, and so $f$ must be $ik$-preserving.  Applying a similar argument when $f$ is $ij$-transposing yields that $f$ is $ij$-preserving if and only if $f$ is $ik$-preserving.  Similarly, $f$ is $ij$-preserving if and only if $f$ is $kj$-preserving.  Thus, either
$f(E_{ij})=\lambda_{ij}E_{ij}$ for all $1 \leq i,j \leq l$, or
$f(E_{ij})=\lambda_{ij}E_{ji}$ for all $1 \leq i,j\leq l$, i.e.\ $f$ is
either \emph{leading-submatrix preserving} or \emph{leading-submatrix transposing}.  


Now we consider $l+1 \leq k \leq m$.  Fix $i$ with $1 \leq i \leq l$.  Applying the same argument as above to $f(E_{ii} + E_{ik})$, we have that $f(E_{ik})=\vw_k^\top\vz_k$ and $\vw_k=\lambda \ve_i^{(l)}$ or $\vz_k = \lambda \ve_i^{(m)}$.  For each $1 \leq j \leq l$,
\[
f(E_{ij}+E_{ik}) 
=\left\{\begin{array}{ll} \lambda_{ij}\ve_i^{(l)\top}\ve_j^{(m)} + \vw_k^\top\vz_k, & \text{if } f \text{ is leading-submatrix preserving}\\ 
\lambda_{ij} \ve_j^{(l)\top}\ve_i^{(m)}  + \vw_k^\top\vz_k, & \text{if } f \text{ is leading-submatrix transposing} 
\end{array}\right.
\]
and so by Lemma \ref{rank1}, 
\[
\begin{array}{ll} 
\vw_k=\lambda'\ve_i^{(l)} \text{ or } \vz_k=\lambda'\ve_j^{(m)}, &  \text{if } f \text{ is leading-submatrix preserving}\\ 
\vw_k=\lambda'\ve_j^{(l)} \text{ or } \vz_k=\lambda'\ve_i^{(m)}, & \text{if } f \text{ is leading-submatrix transposing}.
\end{array}
\]
Combining this with the previous constraints on $\vw_k$ and $\vz_k$, we see that for $l+1 \leq k \leq m$
\[
f(E_{ik}) 
=\left\{\begin{array}{ll} \lambda_{ik}\ve_i^{(l)\top}\vz_k, & \text{if } f \text{ is leading-submatrix preserving}\\ 
\lambda_{ik} \vw_k^\top\ve_i^{(m)}, & \text{if } f \text{ is leading-submatrix transposing}
\end{array}\right.
\]
for some $\lambda_{ik} \in \F_q^*$.  For $1 \leq k \leq l$, define $\vw_k:=\vx_k$ and $\vz_k:=\vy_k$, and by our earlier argument with $X:={\scriptsize \begin{bmatrix} \vx_1^\top \cdots \vx_l^\top \end{bmatrix}}$ and $Y:={\scriptsize \begin{bmatrix} \vy_1 \\ \vdots \\ \vy_m \end{bmatrix}}$, we may assume $\vx_k=\ve_k^{(l)}$ and $\vy_k=\ve_k^{(m)}$.  Since $f$ is linear and invertible,
\begin{eqnarray*}
m&=& \dim \Span \{E_{ik} ~|~1 \leq k \leq m\}= \dim \Span \{f(E_{ik}) ~|~1 \leq k \leq m\}\\
&=& \hspace{-.12in}\left\{ \hspace{-.05in}\begin{array}{ll} \dim \Span\{\ve_i^{(l)\top}\vz_{k}~|~1 \leq k \leq m\}, & \text{if } f \text{ is leading-submatrix preserving}\\ 
\dim \Span \{\vw_{k}^{\top}\ve_i^{(m)}~|~1 \leq k \leq m\}, & \text{if } f \text{ is leading-submatrix transposing}
\end{array}\right.\\
&=& \hspace{-.12in}\left\{ \hspace{-.05in}\begin{array}{ll} \dim \Span\{\vz_{k}~|~1 \leq k \leq m\}, & \text{if } f \text{ is leading-submatrix preserving}\\ 
\dim \Span \{\vw_{k}~|~1 \leq k \leq m\}, & \text{if } f \text{ is leading-submatrix transposing}, 
\end{array}\right.,
 \end{eqnarray*}
 where the final equality holds because, for a fixed $i$, both the maps $\ve_i^{(l)\top}\vz_{k} \mapsto \vz_k$ and $\vw_{k}^{\top}\ve_i^{(m)} \mapsto \vw_k$ are linear bijections.  Since $\vw_{k} \in \F_q^l$, $\dim \Span\{\vw_{k}~|~1 \leq k \leq m\}\leq l$.  Thus, if $l < m$, we cannot have $\dim \Span\{\vw_{k}~|~1 \leq k \leq m\} =m$, and so $f$ cannot be leading-submatrix transposing.  Thus, if $l<m$, then $f$ is leading-submatrix preserving and, since $m= \dim \Span \{\vz_{k}~|~1 \leq k \leq m\}$, we have that $\{\vz_{k}~|~1 \leq k \leq m\}$ is a linearly independent set.  Thus, the matrix $Z :={\scriptsize \begin{bmatrix} \vz_{1} \\ \vdots \\ \vz_m \end{bmatrix}}$ is in $\GL_m(\F_q)$.  

If $l < m$, then for $1 \leq i \leq l$ and $1 \leq k \leq m$, $f(E_{ik}) = \lambda_{ik} \ve_i^{(l)\top}\vz_k$.  Since the map $\hat{\hat{f}}: \Matlm \to \Matlm$ defined by $\hat{\hat{f}}(A) = f(A)Z^{-1}$ is also an invertible, linear, rank-preserving map and $\hat{\hat{f}}(E_{ik})= \lambda_{ik} \ve_i^{(l)\top}\vz_k = (\lambda_{ik} E_{ik} Z)Z^{-1} = \lambda_{ik} E_{ik}$, we may assume, by replacing $f$ by $\hat{\hat{f}}$ if necessary, that if $l <m$, then the invertible, linear, rank-preserving map $f$ satisfies $f(E_{ik})=\lambda_{ik} E_{ik}$ for $1 \leq i \leq l$, $1 \leq k \leq m$.  Recall that if $l=m$, then for $1\leq i,j \leq l$, we have $f(E_{ij}) = \lambda_{ij} E_{ij}$ or $f(E_{ij}) = \lambda_{ij} E_{ji}$.  Thus, in either case, all that remains to understand $f$ is to determine the values of $\lambda_{ij}$ for $1 \leq i \leq l$ and $1 \leq j \leq m$.  

We now show that without loss of generality, we may assume that $\lambda_{ij}=1$ for all $1 \leq i \leq l$ and $1 \leq j \leq m$.  
For ease of notation, we will only consider the case when $f$ is leading-submatrix preserving, but a similar argument holds when $f$ is leading-submatrix transposing. 
Under the assumption that $f$ is leading-submatrix preserving, $f(E_{ij})=\lambda_{ij}E_{ij}$ for $1 \leq i\leq l$, $1 \leq j \leq m$, and so $f\left(\sum_{i=1}^l \sum_{j=1}^m E_{ij}\right)= \Lambda$ where $\Lambda = [\lambda_{ij}]$.  Let $\vlambda_i=(\lambda_{i1}, \lambda_{i2}, \ldots, \lambda_{im})$ denote the $i^\text{th}$ row of $\Lambda$.  
Since $\sum_{i=1}^l \sum_{j=1}^m E_{ij}$ has rank 1 and $f$ is rank-preserving, $\Lambda$ must have rank 1.  Recall that by hypothesis, $f(E_{ii})=E_{ii}$, and so $\lambda_{ii}=1$ for $1 \leq i \leq l$, and in particular, $\lambda_{11}=1$, and so $\vlambda_1$ is non-zero.  Thus, since $\Lambda$ has rank 1, there exists some $\alpha_i \in \F_q$ such that $\vlambda_i = \alpha_i \vlambda_1$ for each $1 \leq i \leq l$.  Using the fact that $\lambda_{jj}=1$, we see that each $\alpha_j \neq 0$ and $\lambda_{1j}=\alpha_j^{-1}$.    Hence, for $1 \leq i,j \leq l$, $\lambda_{ij} = \alpha_i \lambda_{1j} = \alpha_i \alpha_j^{-1}$. 
For $l+1 \leq k \leq m$, set $\beta_k=\lambda_{1k}$.  Then since $\Lambda$ has rank 1 and since the first column of $\Lambda$ is $(1, \alpha_2, \ldots, \alpha_l)^\top$, we have that $\lambda_{ik}=\alpha_i \beta_k.$

Let $D_1=\text{diag}(1, \alpha_2, \ldots, \alpha_l) \in \GL_l(\F_q)$ and let $D_2=\text{diag}(1, \alpha_2^{-1}, \ldots, \alpha_l^{-1},\beta_{l+1}, \ldots, \beta_m)\in\GL_m(\F_q)$.  Then for $1 \leq i \leq l$ and $1 \leq j \leq m$, since 
\[
f(E_{ij})= \lambda_{ij} E_{ij} =  \left\{\begin{array}{ll} \alpha_i \alpha_j^{-1} E_{ij}, & \text{if } 1 \leq j \leq l\\
\alpha_i \beta_j E_{ij},& \text{if } l+1 \leq j \leq m
\end{array} \right.
\]
we have $f(E_{ij})= D_1 E_{ij} D_2$. Since the map $\tf: \Matlm \to \Matlm$ defined by $\tf(A) = D_1^{-1}f(A)D_2^{-1}$ is also an invertible, linear, rank-preserving map and $\tf(E_{ij})= D_1^{-1}(D_1 E_{ij} D_2)D_2^{-1}= E_{ij}$, we may assume, by replacing $f$ by $\tf$ if necessary, that the invertible, linear, rank-preserving map $f$ fixes each $E_{ij}$, and so each $\lambda_{ij}=1$ for $1 \leq i\leq l$, $1 \leq j \leq m$.

Finally, since the images of $E_{ij}$, $1 \leq i,j \leq l$ completely determine $f$,
 we have (under the assumption that $f(E_{ij})=E_{ij}$ or if $l=m$, possibly $f(E_{ij}) = E_{ji}$) that either $f(A)=A$ for all $A \in \Matlm$ or $f(A)=A^\top $ for all $A \in \Matlm$, where the second case can only occur if $l=m$.  Let $L=D_1X$ and $M=YZD_2$ where $X,Y, Z, D_1,$ and $D_2$ are the matrices defined above that enabled the assumption $f(E_{ij})=E_{ij}$ or $f(E_{ij})=E_{ji}$.  Then we see every invertible, linear, rank-preserving map $f$ is either of the form $f(A) = LAM$ for every $A \in \Matlm$ or of the form $f(A)=LA^\top M$ for every $A \in \Matlm$, again where the second case can only occur if $l=m$.
\end{proof}

From Proposition \ref{rankpreserving}, we see that the collection of linear matrix-equivalence maps consists of only those maps that are a composition of multiplication on the left or right by invertible matrices and, when the matrix codewords are square, matrix transposition.  To determine the group structure of this collection, we must recast these maps so that they live within some common group.  Since they are all linear maps acting on objects with $l\times m$ entries, we may switch to viewing these maps as elements of $\GL_{lm}(\F_q)$ acting on extended row vectors of length $lm$ instead of on $l \times m$ matrices, where these vectors are formed simply by concatenating the $l$ rows of the matrix.  We will denote the collection of \emph{matrix equivalence maps acting on extended row vectors} by $\LEquiv_{\Vector}(\Matlm)$.  As a subgroup of $\GL_{lm}(\F_q)$, it is relatively straight forward to show that $\LEquiv_{\Vector}(\Matlm)$ has the structure of a semi-direct product of the subgroup generated by the map of order two for matrix transposition and the subgroup generated by matrices of the form $L \otimes M$ where $L \in \GL_l(\F_q)$, $M \in \GL_m(\F_q)$ and where $\otimes$ denotes the Kronecker product.  Additionally, since $\lambda I_l \otimes \lambda^{-1}I_m = I_l \otimes I_m =I_{lm}$ for any $\lambda \in \F_q^*$, we see that the subgroup generated by matrices of the form $L \otimes M$ is isomorphic to $\GL_l(\F_q) \times \GL_l(\F_q)$ modded out by the subgroup generated by $\lambda I_l \otimes \lambda^{-1}I_m$.  As a consequence of this result and the fact that there is a natural isomorphism between $\LEquiv_{\Vector}(\Matlm)$ and $\LEquiv_{\Mat}(\Matlm)$, we have the following proposition; for details of this proof, we refer the reader to the commentary after Proposition 3.2.17 through Corollary 3.2.22 in \cite{MyThesis}.  

\begin{proposition}\label{LEquivMat}
There is an isomorphism of groups 
\[
\LEquiv_{\Mat}(\Matlm) \cong \LEquiv_{\Vector}(\Matlm) \cong \left\{ \begin{array}{ll} \Z_2 \ltimes \left(\GL_l(\F_q) \times \GL_l(\F_q)\right)/N & \text{if } l=m \\ 

\left(\GL_l(\F_q) \times \GL_m(\F_q)\right)/N & \text{if } l \neq m 
\end{array} \right.
\]
where $N=\{(\lambda I_l, \lambda^{-1}I_m)~|~\lambda~\in~\F_q^*\} \leq \GL_l(\F_q) \times \GL_m(\F_q)$. 
\end{proposition}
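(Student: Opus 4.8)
The plan is to invoke Proposition \ref{rankpreserving}, which already pins down every invertible linear rank-preserving map on $\Matlm$ as $A \mapsto LAM$, or, only when $l=m$, $A \mapsto LA^\top M$, and then to transport all of these maps into the single ambient group $\GL_{lm}(\F_q)$ where their group structure can be read off. To do this, let $\operatorname{vec}:\Matlm \to \F_q^{lm}$ be the $\F_q$-linear bijection that concatenates the $l$ rows of a matrix into one row vector of length $lm$. Conjugation by $\operatorname{vec}$ is an isomorphism from the group of linear bijections of $\Matlm$ onto $\GL_{lm}(\F_q)$, and it carries $\LEquiv_{\Mat}(\Matlm)$ onto $\LEquiv_{\Vector}(\Matlm)$; this establishes the first displayed isomorphism and reduces the problem to describing $\LEquiv_{\Vector}(\Matlm)$ as a subgroup of $\GL_{lm}(\F_q)$.

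I would first translate the multiplication maps into Kronecker products. With the row-stacking convention and the right action $\vx \mapsto \vx A$, a direct computation shows that $A \mapsto AM$ becomes right multiplication by $I_l \otimes M$ and $A \mapsto LA$ becomes right multiplication by $L^\top \otimes I_m$; the mixed-product rule $(P \otimes Q)(R \otimes S) = PR \otimes QS$ then gives that $A \mapsto LAM$ corresponds to $L^\top \otimes M$. Since $L \mapsto L^\top$ is a bijection of $\GL_l(\F_q)$, the multiplication maps correspond exactly to the set $H = \{L \otimes M : L \in \GL_l(\F_q),\, M \in \GL_m(\F_q)\}$, and the mixed-product rule shows that $(L,M) \mapsto L \otimes M$ is a group homomorphism $\GL_l(\F_q) \times \GL_m(\F_q) \to \GL_{lm}(\F_q)$ with image $H$.

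The crux is computing the kernel of this homomorphism. Reading the equation $L \otimes M = I_{lm}$ block by block yields $L_{ij} M = \delta_{ij} I_m$: the diagonal blocks force $M = \mu I_m$ for some $\mu \in \F_q^*$ with $L_{ii} = \mu^{-1}$, and the off-diagonal blocks force $L_{ij} = 0$ for $i \neq j$, so $L = \mu^{-1} I_l$. Hence the kernel is precisely $N = \{(\lambda I_l, \lambda^{-1} I_m) : \lambda \in \F_q^*\}$, and the first isomorphism theorem gives $H \cong (\GL_l(\F_q) \times \GL_m(\F_q))/N$. When $l \neq m$, Proposition \ref{rankpreserving} forbids the transposing case, so $\LEquiv_{\Vector}(\Matlm) = H$ and this case is complete.

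For $l = m$ I would adjoin the transpose map $\tau : A \mapsto A^\top$, which is linear of order two. Writing $\phi_{L,M}(A) = LAM$, the identity $\tau \circ \phi_{L,M} \circ \tau = \phi_{M^\top, L^\top}$ shows $\tau$ normalizes $H$, so $H \trianglelefteq \langle H, \tau \rangle$ and, by Proposition \ref{rankpreserving}, $\langle H, \tau \rangle = H \cup H\tau = \LEquiv_{\Vector}(\Matlm)$. To see the product is genuinely semidirect one checks $\tau \notin H$: if $LAM = A^\top$ for all $A$ then setting $A = I_l$ gives $M = L^{-1}$ and hence $LAL^{-1} = A^\top$; but conjugation by $L$ is an algebra automorphism while transposition is an anti-automorphism, and comparing the two on a noncommuting pair $A,B$ forces all matrices to commute, which is false for $l \geq 2$. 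Thus $\langle \tau \rangle \cong \Z_2$ intersects $H$ trivially and we obtain $\LEquiv_{\Vector}(\Matlm) \cong \Z_2 \ltimes \bigl((\GL_l(\F_q) \times \GL_l(\F_q))/N\bigr)$. The main obstacle throughout is the kernel computation together with the verification $\tau \notin H$: these are what guarantee the quotient is by $N$ rather than by the naive intersection of the scalar subgroups, and that the $l=m$ case really is a nontrivial semidirect (not direct) product.
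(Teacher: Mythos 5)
Your proposal is correct and follows essentially the same route the paper sketches (vectorizing to view the maps inside $\GL_{lm}(\F_q)$, identifying the multiplication maps with Kronecker products $L\otimes M$, computing the kernel $N$ of $(L,M)\mapsto L\otimes M$, and adjoining the order-two transposition when $l=m$); the paper itself defers these details to the author's thesis. Your explicit kernel computation and the verification that $\tau\notin H$ supply exactly the steps the paper leaves implicit.
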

\begin{remark}
Again, we see each linear matrix equivalence map corresponds to a coset of the form $(L,M) \cdot N$ for some $L \in \GL_l(\F_q)$ and $M \in \GL_m(\F_q)$.  For ease of notation, we will henceforth write $[L,M]$ to denote the coset $(L,M) \cdot N$ as before.  
\end{remark}

 \begin{proposition}\label{SLEquivMat}
 The group $\SLEquiv_{\Mat}(\Matlm)$ of semi-linear matrix-equivalence maps is given by
\[
\SLEquiv_{\Mat}(\Matlm) \cong\LEquiv_{\Mat}(\Matlm) ~\rtimes~\Gal(\F_q/\F_p).
\]
 \end{proposition}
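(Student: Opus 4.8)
The plan is to exhibit $\SLEquiv_{\Mat}(\Matlm)$ as an internal semi-direct product of the normal subgroup $\LEquiv_{\Mat}(\Matlm)$ with a complement isomorphic to $\Gal(\F_q/\F_p)$, and then invoke the standard recognition criterion for semi-direct products. The complement will be realized by the maps that apply a field automorphism entrywise: for $\gamma \in \Gal(\F_q/\F_p)$, let $\sigma_\gamma: \Matlm \to \Matlm$ send $A = [a_{ij}]$ to $[a_{ij}^\gamma]$, and set $\Sigma = \{\sigma_\gamma \mid \gamma \in \Gal(\F_q/\F_p)\}$.

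First I would verify that $\Sigma \subseteq \SLEquiv_{\Mat}(\Matlm)$ and that $\gamma \mapsto \sigma_\gamma$ is an injective homomorphism, so that $\Sigma \cong \Gal(\F_q/\F_p)$. Each $\sigma_\gamma$ is additive and satisfies $\sigma_\gamma(\alpha A) = \alpha^\gamma \sigma_\gamma(A)$, hence is $\F_q$-semi-linear; it is invertible with inverse $\sigma_{\gamma^{-1}}$; and since applying $\gamma$ entrywise commutes with matrix multiplication and hence with row reduction (equivalently, sends nonzero minors to nonzero minors), it preserves rank. Injectivity is immediate, since $\sigma_\gamma = \sigma_{\gamma'}$ forces $\gamma = \gamma'$ on all of $\F_q$.

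The decomposition is the heart of the argument. Given $f \in \SLEquiv_{\Mat}(\Matlm)$ with associated automorphism $\gamma \in \Gal(\F_q/\F_p)$, I would show that $g := f \circ \sigma_{\gamma^{-1}}$ lies in $\LEquiv_{\Mat}(\Matlm)$. It is additive, invertible, and rank-preserving as a composite of such maps; for $\F_q$-linearity, compute
\[
g(\alpha A) = f\big(\alpha^{\gamma^{-1}} \sigma_{\gamma^{-1}}(A)\big) = (\alpha^{\gamma^{-1}})^\gamma\, f\big(\sigma_{\gamma^{-1}}(A)\big) = \alpha\, g(A).
\]
Thus $g$ is an invertible $\F_q$-linear rank-preserving map, so $g \in \LEquiv_{\Mat}(\Matlm)$ by Proposition \ref{rankpreserving}, and $f = g \circ \sigma_\gamma \in \LEquiv_{\Mat}(\Matlm) \cdot \Sigma$. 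For the trivial intersection, note that $\sigma_\gamma$ is $\F_q$-linear exactly when $\alpha^\gamma = \alpha$ for every $\alpha \in \F_q$, i.e.\ $\gamma = \mathrm{id}$, so $\LEquiv_{\Mat}(\Matlm) \cap \Sigma = \{\mathrm{id}\}$.

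Finally I would establish normality of $\LEquiv_{\Mat}(\Matlm)$ by computing the conjugation action of $\Sigma$ explicitly, which has the pleasant side effect of identifying the semi-direct product action. Using $\sigma_\gamma(LBM) = L^\gamma B^\gamma M^\gamma$, conjugation of the map $A \mapsto LAM$ by $\sigma_\gamma$ yields $A \mapsto L^\gamma A M^\gamma$ (and commutes past the transpose when $l=m$); in the coset notation of Proposition \ref{LEquivMat} this sends $[L,M]$ to $[L^\gamma, M^\gamma]$, which remains in $\LEquiv_{\Mat}(\Matlm)$. Since $\SLEquiv_{\Mat}(\Matlm) = \LEquiv_{\Mat}(\Matlm)\cdot\Sigma$, conjugation by an arbitrary element factors through conjugation by an element of $\LEquiv_{\Mat}(\Matlm)$ (inner, hence harmless) followed by one from $\Sigma$, so $\LEquiv_{\Mat}(\Matlm)$ is normal. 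The recognition criterion for internal semi-direct products then gives the claimed isomorphism $\SLEquiv_{\Mat}(\Matlm) \cong \LEquiv_{\Mat}(\Matlm) \rtimes \Gal(\F_q/\F_p)$. The only step requiring genuine input is the reduction to $\F_q$-linear maps, which is precisely where Proposition \ref{rankpreserving} does the work; everything else is bookkeeping on semi-linearity, so I expect the main care to be in tracking the $\gamma$ versus $\gamma^{-1}$ twists consistently.
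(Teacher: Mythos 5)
Your proposal is correct and follows essentially the same route as the paper: both identify the coordinate-wise Galois action as a subgroup of rank-preserving semi-linear maps and then peel it off an arbitrary $f\in\SLEquiv_{\Mat}(\Matlm)$ via $f\circ\sigma_{\gamma^{-1}}$ to land in $\LEquiv_{\Mat}(\Matlm)$. The only differences are cosmetic (you verify rank preservation via minors rather than the paper's span-dimension argument, and you spell out the normality/trivial-intersection bookkeeping that the paper leaves implicit).
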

\begin{proof}
 Let $\gamma\in \Gal(\F_{q}/\F_p)$ and $X,Y \in \Matlm$ with $\vx_1, \ldots, \vx_l$ denoting the rows of $X$ and $\vy_1, \ldots, \vy_l$ denoting the rows of $Y$.  Then
   \begin{eqnarray*}
  d_R(X^\gamma, Y^\gamma) &=&  \rank(X^\gamma-Y^\gamma) \\
  &=&  \text{dim}_{\F_q} \left(\text{span}_{\F_q} \{\vx_1^\gamma - \vy_1^\gamma, \ldots,~ \vx_l^\gamma-\vy_l^\gamma \} \right)\\
&=&  \text{dim}_{\F_q} \left(\text{span}_{\F_q} \{\vx_1 - \vy_1, \ldots,~ \vx_l-\vy_l \} \right)^\gamma\\
&=&  \text{dim}_{\F_q} \left(\text{span}_{\F_q} \{\vx_1 - \vy_1, \ldots,~ \vx_l-\vy_l \} \right)\\
&=& \rank(X-Y)\\
&=& d_R(X,Y)
  \end{eqnarray*}
  where the fourth equality holds because $\gamma$ corresponds to a vector space automorphism of $ \F_{q}^m$, and so it will map subspaces to other subspaces of the same dimension.  Thus, each automorphism is a rank weight preserving map, and so $\{(I_l; \gamma)~|~\gamma \in \Gal(\F_{q}/\F_p)\} \subseteq \SLEquiv_{\Mat}(\Matlm)$.  By Proposition \ref{LEquivMat}, 
$  \{(A; id)~|~ A~\in~\LEquiv_{\Mat}(\Matlm)\}~\subseteq~\SLEquiv_{\Mat}(\Matlm)$ as well, and so 
  \[
  \LEquiv_{\Mat}(\Matlm)\rtimes \Gal(\F_{q}/\F_p) \subseteq \SLEquiv_{\Mat}(\Matlm).
  \]
  
  To show reverse containment, let $(A; \gamma)$ be an arbitrary element of $\SLEquiv_{\Mat}(\Matlm)$.  By the argument above $(I_l; \gamma^{-1}) \in \SLEquiv_{\Mat}(\Matlm)$, and since $\SLEquiv_{\Mat}(\Matlm)$ is a group under composition, we have that $(A; \gamma)(I_l;\gamma^{-1}) = (A; id)$ is a semi-linear matrix equivalence map as well.  But $(A; id)$ is in fact a linear map, and so it must be a linear matrix equivalence map.  Thus, $A \in \LEquiv_{\Mat}(\Matlm)$, and so $(A; \gamma) \in \LEquiv_{\Mat}(\Matlm) \rtimes \Gal(\F_{q^m}/\F_p)$.  
%
\end{proof}

\subsection{Relationship between Rank-Metric and Matrix Code\\ Equivalence Maps}
Recall from Section \ref{background} that to each rank-metric code we may associate an matrix code via the map $\epsB$ for matrix expansion with respect to an ordered basis, and to each matrix code we may associate a rank-metric code via the map $\epsB^{-1}$ for matrix compression.  This association provides a natural framework for comparing rank-metric and matrix code equivalence to determine if, for example, there are codes that would be viewed as equivalent in one setting while being viewed as inequivalent in the other.  

In Theorem \ref{LEquivRMLEquivMat}, we will show that the notion of linear matrix-equivalence is strictly more general than the notion of linear rank-metric-equivalence.  Specifically, we will show that whenever two rank-metric codes are linearly rank-metric equivalent, their matrix expansions are always linearly matrix equivalent, but the converse is only true under certain conditions.  In Theorem \ref{SLEquivRMSLEquivMat} below, we show a similar result for semi-linear rank-metric- and matrix-equivalence.  First we need a lemma characterizing the matrix representations of the $\F_q$-linear transformations for multiplication by $\alpha \in \F_{q^m}$ and for $q$-exponentiation, as well as a lemma characterizing how the map $\epsB$ interacts with matrix multiplication and with $\Gal(\F_q/\F_p)$.
  
\begin{lemma}\label{M_alphaQ}
Let $\vb=(b_1, \ldots, b_m)$ be an ordered basis for $\F_{q^m}$ over $\F_q$. 
Then for any $\vx \in \F_{q^m}^l$ and any $\alpha\in \F_q^*$, we have  
\[
\begin{array}{lll}
\epsB(\alpha\vx)=\epsB(\vx)M_{\alpha} &\text{and} &\epsB(\vx^q)=\epsB(\vx)Q,
\end{array}
\]
where $M_{\alpha}=\epsB(\alpha \vb)$ and $Q=\epsB(\vb^q)$.  In other words, $M_{\alpha}$ is the matrix for the $\F_q$-linear transformation of multiplication by $\alpha$ and $Q$ is the matrix for the $\F_q$-linear transformation of exponentiation by $q$. 
\end{lemma}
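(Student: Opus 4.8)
The plan is to reduce everything to the action of $\epsB$ on a single field element and then assemble the matrix identities row by row, using only the $\F_q$-linearity of $\epsB$ recorded in Remark \ref{epsBlinear}. Write $\vx=(x_1,\dots,x_l)$ and expand each coordinate as $x_i=\sum_{j=1}^m x_{ij}b_j$ with $x_{ij}\in\F_q$, so that $\epsB(\vx)=[x_{ij}]$ and the $i$th row of $\epsB(\vx)$ is $(x_{i1},\dots,x_{im})$. Because matrix expansion acts on $\F_{q^m}^l$ row by row, it suffices to verify each identity on a single entry $x_i$ and then stack the $l$ resulting row identities. Throughout, the operative mechanism is that each row of the target matrix ($M_\alpha$ or $Q$) is the $\epsB$-image of the corresponding transformed basis element, so that right-multiplication by that matrix realizes exactly the operation ``expand in the basis $\vb$, apply the map to each basis vector, recombine with the same $\F_q$-coordinates.''

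For the multiplication identity, I would first write $\alpha x_i=\sum_{j=1}^m x_{ij}(\alpha b_j)$ and apply $\F_q$-linearity of $\epsB$ (legitimate precisely because each $x_{ij}\in\F_q$) to obtain $\epsB(\alpha x_i)=\sum_{j=1}^m x_{ij}\,\epsB(\alpha b_j)$. Since the $j$th row of $M_\alpha=\epsB(\alpha\vb)$ is $\epsB(\alpha b_j)$, the right-hand side is the product of the $i$th row of $\epsB(\vx)$ with $M_\alpha$; stacking over $i$ gives $\epsB(\alpha\vx)=\epsB(\vx)M_\alpha$. The identification of $M_\alpha$ as the matrix of multiplication-by-$\alpha$ is then immediate, since multiplication by $\alpha$ is an $\F_q$-linear endomorphism of $\F_{q^m}$ and $M_\alpha$ records its effect on the basis $\vb$.

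The $q$-power identity runs in parallel with one extra ingredient. Applying the Frobenius to $x_i=\sum_j x_{ij}b_j$ yields $x_i^q=\sum_j x_{ij}^q b_j^q$ by additivity of $q$th powers, and here I would use that $x_{ij}\in\F_q$ forces $x_{ij}^q=x_{ij}$, so $x_i^q=\sum_j x_{ij}b_j^q$. Then $\F_q$-linearity of $\epsB$ gives $\epsB(x_i^q)=\sum_j x_{ij}\,\epsB(b_j^q)$, which is the $i$th row of $\epsB(\vx)$ times $Q=\epsB(\vb^q)$; stacking gives $\epsB(\vx^q)=\epsB(\vx)Q$, and $Q$ is the matrix of the $\F_q$-linear $q$-power map since that map is additive and fixes $\F_q$ pointwise. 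There is no deep obstacle, only one point demanding care: the collapse $x_{ij}^q=x_{ij}$. This is exactly what makes the coefficients carried by $Q$ act $\F_q$-linearly rather than semilinearly, and without passing to the $\F_q$-coordinates $x_{ij}$ the power map would fail to be representable by a fixed matrix acting on the right.
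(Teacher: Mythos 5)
Your proof is correct, and the one point that genuinely requires care — the collapse $x_{ij}^q = x_{ij}$ for $x_{ij}\in\F_q$, which is what lets the Frobenius be represented by a fixed right-multiplication rather than a semilinear map — is exactly the right thing to isolate. The route differs from the paper's in direction rather than in substance: you verify $\epsB(\alpha\vx)=\epsB(\vx)M_\alpha$ by a forward, row-by-row expansion of $\epsB(\alpha x_i)=\sum_j x_{ij}\,\epsB(\alpha b_j)$, whereas the paper instead applies the compression map to the right-hand side, computing $\epsB^{-1}(\epsB(\vx)M_\alpha)=(b_1,\ldots,b_m)M_\alpha^\top\epsB(\vx)^\top=(\alpha b_1,\ldots,\alpha b_m)\epsB(\vx)^\top=\alpha\vx$ using the explicit formula $\epsB^{-1}(X)=(b_1,\ldots,b_m)X^\top$ from Definition \ref{compression}, and then applies $\epsB$ to both sides. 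Both arguments pivot on the same key fact — the rows of $M_\alpha$ (resp.\ $Q$) are the $\vb$-expansions of $\alpha b_j$ (resp.\ $b_j^q$) — so the content is the same; your version is marginally more self-contained since it never invokes $\epsB^{-1}$, while the paper's version has the advantage of exercising the compression formula that is reused in the proof of Lemma \ref{epsBL}. One cosmetic remark: the lemma as stated restricts to $\alpha\in\F_q^*$, but nothing in your argument (or the paper's) uses this, and indeed the lemma is later applied with $\alpha$ a primitive element of $\F_{q^m}$; your proof correctly works for all $\alpha\in\F_{q^m}$.
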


\begin{proof}
For any $X \in \Matlm$, we have $\epsB^{-1}(X) = (X (b_1, \ldots, b_m)^\top)^\top = (b_1, \ldots, b_m) X^\top$ from Definition \ref{compression}.  Hence, for $\vx \in \F_{q^m}^l$ with $\epsB(\vx) = [x_{ij}]$, 
\begin{eqnarray*}
\epsB^{-1}\left(\epsB(\vx)M_{\alpha}\right) &=& (b_1, \ldots, b_m) \left(\epsB(\vx)M_{\alpha}\right)^\top\\
&=&  (b_1, \ldots, b_m) M_{\alpha}^\top \epsB(\vx)^\top\\
&=& (\alpha b_1, \ldots, \alpha b_m) \epsB(\vx)^\top\\
&=& \left(\sum_{j=1}^m x_{1j} \alpha b_j, \ldots, \sum_{j=1}^m x_{lj} \alpha b_j\right)\\
&=&  \alpha \left(\sum_{j=1}^m x_{1j} b_j, \ldots, \sum_{j=1}^m x_{lj} b_j\right)\\
&=& \alpha \vx,
\end{eqnarray*}
where the third equality holds because the rows of $M_{\alpha}$ are the images of the basis elements under multiplication by $\alpha$.  Applying $\epsB$ to both sides of the equation yields the first result.  Since the rows of $Q$ are the images of the basis elements under $q$-exponentiation, a similar argument yields the second result.
 \end{proof}

\begin{lemma}\label{epsBL}
Let $\vb= (b_1, \ldots, b_m) \subseteq \F_{q^m}$ be an ordered basis for $\F_{q^m}$ over $\F_q$.  Let $\sigma_p$ be the Frobenius automorphism, and let $1 \leq r \leq e-1$.  Then for any $\vx \in \F_{q^m}^l$ 
\[
\begin{array}{lll}
\epsB(\vx L) = L^\top \epsB(\vx) & \text{and} &\epsB\left(\vx^{\sigma_p^r}\right)=(\epsB(\vx)P_r)^{\sigma_p^r},
\end{array}
\]
where $L \in \GL_l(\F_q)$ and $P_r=\left(\epsB(\vb^{\sigma_p^r})\right)^{\sigma_p^{-r}}$.
\end{lemma}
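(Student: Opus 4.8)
The plan is to prove the two identities separately, in each case reducing to a verification on the ordered basis $\vb$ and then extending by $\F_q$-linearity of $\epsB$. The cleanest route is the one already used in Lemma \ref{M_alphaQ}: rather than manipulating $\epsB$ directly, I would apply $\epsB^{-1}$ to the proposed right-hand side, show it equals the left-hand input transformed appropriately, and then invoke invertibility of $\epsB$ to conclude. Throughout I rely on the two explicit forms of compression from Definition \ref{compression}, namely $\epsB^{-1}(X) = (b_1,\ldots,b_m)X^\top$.

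For the first identity $\epsB(\vx L) = L^\top \epsB(\vx)$, I would start from $\epsB^{-1}(L^\top \epsB(\vx))$ and compute
\[
\epsB^{-1}\left(L^\top \epsB(\vx)\right) = (b_1,\ldots,b_m)\left(L^\top \epsB(\vx)\right)^\top = (b_1,\ldots,b_m)\epsB(\vx)^\top L.
\]
Here the key observation is that $(b_1,\ldots,b_m)\epsB(\vx)^\top = \epsB^{-1}(\epsB(\vx)) = \vx$, so the expression collapses to $\vx L$; since $L$ has entries in $\F_q$, right-multiplication by $L$ commutes past the $\F_q$-linear compression without disturbing the basis. Applying $\epsB$ to both sides then yields $L^\top \epsB(\vx) = \epsB(\vx L)$. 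The role of the transpose is the main conceptual point: because matrix expansion stacks the coordinate expansions of the entries of $\vx$ as rows, a right action $\vx \mapsto \vx L$ on the ambient $\F_{q^m}^l$ becomes a left action $L^\top$ on the rows of the expanded matrix.

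For the second identity involving the Frobenius $\sigma_p$, the new feature is that $\sigma_p^r$ is only $\F_p$-linear, not $\F_q$-linear, so I cannot simply pull it outside $\epsB$ against an $\F_q$-basis. The natural strategy is to first record how $\epsB$ interacts with $\sigma_p^r$ on a single field element: writing $a = \sum_j a_j b_j$ gives $a^{\sigma_p^r} = \sum_j a_j^{\sigma_p^r} b_j^{\sigma_p^r}$, which is why the twisted basis $\vb^{\sigma_p^r}$ and the correction $P_r=\left(\epsB(\vb^{\sigma_p^r})\right)^{\sigma_p^{-r}}$ appear. I would verify that $\epsB(\vx^{\sigma_p^r}) = (\epsB(\vx)P_r)^{\sigma_p^r}$ by expanding both sides coordinatewise, using that $\sigma_p^r$ fixes the prime field and acts as a ring automorphism to distribute over the sums and products defining the expansion; applying $\epsB^{-1}$ and the definition of $P_r$ lets me reconcile the basis-change matrix against the coordinatewise Galois action.

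The main obstacle is the bookkeeping in the second identity: one must track the order in which $\sigma_p^r$ is applied relative to the basis-change matrix $P_r$, since the Galois action and matrix multiplication do not commute freely when $\sigma_p^r$ moves the entries of $P_r$. The $\sigma_p^{-r}$ in the definition of $P_r$ is precisely the compensation that makes $(\epsB(\vx)P_r)^{\sigma_p^r}$ produce the correct twisted coordinates; verifying that this cancellation is exact, and that it holds for every row of $\epsB(\vx)$ simultaneously, is the delicate step and is where I would spend the bulk of the care, checking it first for a single entry of $\vx$ before invoking linearity across the $l$ rows.
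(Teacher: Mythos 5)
Your proposal is correct and follows essentially the same route as the paper: both identities are verified by applying $\epsB^{-1}$ to the proposed right-hand side via the formula $\epsB^{-1}(X) = (b_1,\ldots,b_m)X^\top$, collapsing to $\vx L$ (resp.\ $\vx^{\sigma_p^r}$), and then applying $\epsB$. Your key observations --- that the transpose converts the right action on $\F_{q^m}^l$ into a left action on the rows of the expansion, and that $P_r^{\sigma_p^r}=\epsB(\vb^{\sigma_p^r})$ together with $\bigl(\sum_j a_j b_j\bigr)^{\sigma_p^r}=\sum_j a_j^{\sigma_p^r}b_j^{\sigma_p^r}$ makes the Galois twist cancel exactly --- are precisely the steps the paper's computation carries out.
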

\begin{proof}
Using Definition \ref{compression}, we have
\begin{eqnarray*}
\epsB^{-1}(L^\top \epsB(\vx)) &=& (b_1, \ldots, b_m) (\epsB(\vx))^\top L \\
&=&\epsB^{-1}( \epsB(\vx) )L\\
&=& \vx L.
\end{eqnarray*}
Applying $\epsB$ to both sides of the previous equation, we obtain the desired first result.

Again from Definition \ref{compression}, we see
\begin{eqnarray*}
\epsB^{-1}\left((XP_r)^{\sigma_p^r}\right) &=&  (b_1, \ldots, b_m) \left(X^{\sigma_p^r}P_r^{\sigma_p^r}\right)^\top\\
&=& (b_1, \ldots, b_m) \left(P_r^{\sigma_p^r}\right)^\top \left(X^{\sigma_p^r}\right)^\top\\
&=&(b_1, \ldots, b_m)  \epsB\left(\vb^{\sigma_p^r}\right)\top \left(X^{\sigma_p^r}\right)^\top\\
&=& (b_1^{p^r}, \ldots, b_m^{p^r}) \left(X^{\sigma_p^r}\right)^\top\\
&=& \left(\sum_{j=1}^m x_{1j}^{\sigma_p^r}b_j^{\sigma_p^r}, \ldots, \sum_{j=1}^m x_{lj}^{\sigma_p^r}b_j^{\sigma_p^r}\right)  \\
&=& \left(\left(\sum_{j=1}^m x_{1j}b_j\right)^{\sigma_p^r}, \ldots, \left(\sum_{j=1}^m x_{lj}b_j\right)^{\sigma_p^r}\right)  \\
&=& \left(\sum_{j=1}^m x_{1j}b_j, \ldots, \sum_{j=1}^m x_{lj}b_j\right)^{\sigma_p^r}\\
&=& \vx^{\sigma_p^r}.
\end{eqnarray*}
We obtain the second result by applying $\epsB$ to both sides of the equation above.
\end{proof}

\begin{theorem}\label{LEquivRMLEquivMat}
Let $C_1,C_2 \subseteq \F_{q^m}^l$ be rank-metric codes.  If $C_1$ and $C_2$ are linearly rank-metric equivalent, then $\epsB(C_1)$ and $\epsB(C_2)$ are linearly matrix equivalent for any basis $\vb$ of $\F_{q^m}$ over $\F_q$.

Conversely, for a fixed basis $\vb$ of $\F_{q^m}$ over $\F_q$, if $\epsB(C_1)$ and $\epsB(C_2)$ are linearly matrix equivalent, then $C_1$ and $C_2$ are linearly rank-metric equivalent if and only if there exists a map $g \in \LEquiv_{\Mat}(\Matlm)$ that satisfies $g(\epsB(C_1))= \epsB(C_2)$ and has the form 
\[
g(A) = LAM_{\alpha} \text{ for all }A \in \Matlm
\]
for some $L \in \GL_l(\F_q)$ and some $M_{\alpha}$ as in Lemma \ref{M_alphaQ}.  
\end{theorem}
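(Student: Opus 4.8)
The plan is to reduce the entire statement to the two explicit normal forms already in hand. By Proposition \ref{LEquivRM}, every $\F_{q^m}$-linear rank-metric equivalence has the shape $\vx \mapsto \alpha\vx L$ with $\alpha \in \F_{q^m}^*$ and $L \in \GL_l(\F_q)$, and the way $\epsB$ interacts with the two building blocks of such a map is exactly what Lemmas \ref{M_alphaQ} and \ref{epsBL} record. The single computation driving both halves of the theorem is
\[
\epsB(\alpha\vx L)=L^\top\,\epsB(\vx)\,M_\alpha,
\]
obtained by first applying $\epsB(\vx L)=L^\top\epsB(\vx)$ from Lemma \ref{epsBL} and then $\epsB(\alpha\vx)=\epsB(\vx)M_\alpha$ from Lemma \ref{M_alphaQ}, where $M_\alpha=\epsB(\alpha\vb)$. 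Since multiplication by a nonzero $\alpha\in\F_{q^m}$ is an invertible $\F_q$-linear map on $\F_{q^m}$, we have $M_\alpha\in\GL_m(\F_q)$, which I will use to guarantee invertibility and rank-preservation.

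For the first (unconditional) assertion, suppose $f(\vx)=\alpha\vx L$ realizes the rank-metric equivalence $f(C_1)=C_2$. I would define $g\colon\Matlm\to\Matlm$ by $g(A)=L^\top A M_\alpha$; this is $\F_q$-linear, invertible, and rank-preserving because left and right multiplication by invertible matrices preserve rank, so $g\in\LEquiv_{\Mat}(\Matlm)$. The displayed identity gives $g(\epsB(\vx))=\epsB(f(\vx))$, hence $g(\epsB(C_1))=\epsB(C_2)$, for \emph{any} choice of basis $\vb$ (only $M_\alpha$ depends on $\vb$). This same computation also settles the forward half of the converse's biconditional: relabeling $L^\top$ as an arbitrary element of $\GL_l(\F_q)$, the map $g$ is already of the required special form $g(A)=LAM_\alpha$.

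For the reverse half of the biconditional, I would pull a special-form map back through $\epsB$. Given $g(A)=LAM_\alpha$ with $g(\epsB(C_1))=\epsB(C_2)$, set $f:=\epsB^{-1}\circ g\circ\epsB$, so that $f(C_1)=\epsB^{-1}(g(\epsB(C_1)))=C_2$ automatically. Reading Lemmas \ref{epsBL} and \ref{M_alphaQ} backwards — that is, $\epsB^{-1}(LB)=\epsB^{-1}(B)L^\top$ and $\epsB^{-1}(BM_\alpha)=\alpha\,\epsB^{-1}(B)$ for every $B\in\Matlm$ (using that $\epsB$ is a bijection, Definition \ref{compression}) — collapses $f$ to $f(\vx)=\alpha\vx L^\top$. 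Since $L^\top\in\GL_l(\F_q)$ and $\alpha\in\F_{q^m}^*$, this is exactly an element of $\LEquiv_{\RM}(\F_{q^m}^l)$ by Proposition \ref{LEquivRM}, so $C_1$ and $C_2$ are linearly rank-metric equivalent.

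The substantive point — and the reason the hypothesis cannot be relaxed to ``some matrix equivalence exists'' — is that the pullback $\epsB^{-1}\circ g\circ\epsB$ of a \emph{general} matrix equivalence $g(A)=LAM$ (or $g(A)=LA^\top M$, from Proposition \ref{rankpreserving}) is only guaranteed to be $\F_q$-linear; it is $\F_{q^m}$-linear precisely when the right-hand factor $M$ represents multiplication by a field element, i.e.\ when $M=M_\alpha$. Thus the main thing to get right is not a hard estimate but the bookkeeping that isolates this condition: one must check that the shape $g(A)=LAM_\alpha$ is both necessary (produced by the forward computation) and sufficient (sends the reverse computation back into the field-linear normal form), and that the left factor $L$ and the transposition option in Proposition \ref{rankpreserving} contribute nothing to the $\F_{q^m}$-linearity obstruction. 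The only other care needed is the observation that Lemma \ref{M_alphaQ} applies with $\alpha\in\F_{q^m}^*$ — multiplication by any nonzero $\alpha$ is an invertible $\F_q$-linear map, so $M_\alpha\in\GL_m(\F_q)$ — which is what makes $g$ invertible and rank-preserving in the first place.
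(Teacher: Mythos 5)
Your proposal is correct and follows essentially the same route as the paper's proof: both directions rest on the single identity $\epsB(\alpha\vx L)=L^\top\epsB(\vx)M_\alpha$ obtained from Lemmas \ref{M_alphaQ} and \ref{epsBL}, with the forward implications constructing $g(A)=L^\top AM_\alpha$ and the reverse implication pulling the special-form $g$ back through $\epsB^{-1}$ to recover $f(\vx)=\alpha\vx L^\top\in\LEquiv_{\RM}(\F_{q^m}^l)$. The only cosmetic difference is that you justify $g\in\LEquiv_{\Mat}(\Matlm)$ directly from rank-preservation of invertible left/right multiplication where the paper cites Proposition \ref{rankpreserving}; both are valid.
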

\begin{remark}
Note that in the second portion of the statement, we only assert that there exists a $g \in \LEquiv_{\Mat}(\Matlm)$ of the specified form, but we do not assert that every linear matrix equivalence map sending $\epsB(C_1)$ to $\epsB(C_2)$ will have that form.  The reason for this distinction is that given the map $g$ from the statement, one can compose it with any linear matrix equivalence map fixing $\epsB(C_1)$ and that composition will map $\epsB(C_1)$ to $\epsB(C_2)$; however, not every equivalence map that fixes $\epsB(C_1)$ will have the form specified in the statement, and so the composition need not have the desired form.  
\end{remark}
\begin{proof}
Let $C_1,C_2 \subseteq \F_{q^m}^l$ be linearly rank-metric equivalent codes and fix $f \in \LEquiv_{\RM}(\F_{q^m}^l)$ with $f(C_1) = C_2$. By Proposition \ref{LEquivRM}, $f$ has a representative in $\left(\F_{q^m}^* \times \GL_l(\F_q)\right) /N$ of the form $[\alpha, L]$ for some $\alpha \in \F_{q^m}^*$ and $L \in \GL_l(\F_q)$, where $N = \{ (\lambda,~\lambda^{-1} I_l) ~|~ \lambda \in \F_q^*\}$.  Hence, $C_2 = \alpha C_1 L:= \{ \alpha \vx L ~|~ \vx \in C_1\}$.  Let $\vb$ be an arbitrary ordered basis for $\F_{q^m}$ over $\F_q$.  Then
\[
\begin{array}{lcll}
\epsB(C_2) &=& \epsB(\alpha C_1 L)& \\
&=&\epsB(C_1 L) M_{\alpha} & \text{by Lemma \ref{M_alphaQ}} \\
&=& L^\top \epsB(C_1) M_{\alpha} & \text{by Lemma \ref{epsBL}}.
\end{array}
\]
Define the map $g: \Matlm \to \Matlm$ by $g(A) = L^\top A M_{\alpha}$ for all $A \in \Matlm$.  Since $L^\top \in \GL_l(\F_q)$ and $M_{\alpha} \in \GL_m(\F_q)$, we have $g \in \LEquiv_{\Mat}(\Matlm)$ by Proposition \ref{rankpreserving}, and $g(\epsB(C_1)) = \epsB(C_2)$.  Hence $\epsB(C_1)$ and $\epsB(C_2)$ are linearly matrix equivalent for any basis $\vb$ of $\F_{q^m}$ over $\F_q$, and so the first result holds.

Now assume that $\epsB(C_1)$ and $\epsB(C_2)$ are linearly matrix equivalent where $\vb$ is some fixed basis for $\F_{q^m}$ over $\F_q$.  If, in addition, $C_1$ and $C_2$ are linearly rank-metric equivalent, then there exists some $f \in \LEquiv_{\RM}(\F_{q^m}^l)$ with $f(C_1) = C_2$.  As above, $f$ has a representative in $\left(\F_{q^m}^* \times \GL_l(\F_q)\right) /N$ of the form $[\alpha, L]$ for some $\alpha \in \F_{q^m}^*$ and $L \in \GL_l(\F_q)$, and so by the same logic as above
\[
\epsB(C_2) =L^\top \epsB(C_1) M_{\alpha}.
\]
Define $g: \Matlm \to \Matlm$ by $g(A) = L^\top A M_{\alpha}$ for all $A \in \Matlm$.  Then $g \in \LEquiv_{\Mat}(\Matlm)$ and $g$ satisfies the conditions of the statement of the theorem, and so the result holds.  

Conversely, suppose that there exists some $g \in  \LEquiv_{\Mat}(\Matlm)$ with $g(\epsB(C_1))=\epsB(C_2)$ such that $g$ has the form $g(A) = L A M_{\alpha}$ for all $A \in \Matlm$ for some $L \in \GL_l(\F_q)$ and $\alpha \in \F_{q^m}^*$.  Then
\begin{eqnarray*}
C_2&=& \epsB^{-1}( \epsB(C_2))\\
&=& \epsB^{-1}( g(\epsB(C_1)))\\
&=& \epsB^{-1}( L \epsB(C_1)M_{\alpha})\\
&=&  \epsB^{-1}( \epsB(\alpha C_1 L^\top))\\
&=& \alpha C_1 L^\top
\end{eqnarray*}

Define $f: \F_{q^m}^l \to \F_{q^m}^l$ by $f(\vx) = \alpha \vx L^\top$ for all $\vx \in \F_{q^m}^l$.  Then $f \in \LEquiv_{\RM}(\F_{q^m}^l)$ by Proposition \ref{LEquivRM} and $f(C_1)=C_2$.  Hence $C_1$ and $C_2$ are linearly rank-metric equivalent, and so the second result holds.
\end{proof}

In Theorem \ref{SLEquivRMSLEquivMat} below, we prove a similar result for semi-linear rank-metric and matrix equivalence, but first we need a lemma characterizing the subgroup formed by the matrices for the $\F_q$-linear transformations of $\F_{q^m}$-scalar multiplication and $q$-exponentiation.  

\begin{lemma}\label{Ksubgp}
Let $\vb=(b_1, \ldots, b_m)$ be an ordered basis for $\F_{q^m}$ over $\F_q$.  Let $\alpha$ be a primitive element for $\F_{q^m}$ and let $M_{\alpha}$ and $Q$ be as in Lemma \ref{M_alphaQ}.  Define the subset $K \subseteq \GL_m(\F_q)$ by $K=\lp M_{\alpha} \rp \cdot \lp Q \rp$.  Then $K$ is a subgroup of $\GL_m(\F_q)$ with $K \cong \lp M_{\alpha} \rp \rtimes \lp Q \rp$ and $|K|=m(q^m-1)$. 
\end{lemma}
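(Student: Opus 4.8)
The plan is to recognize $K$ as an internal semidirect product inside $\GL_m(\F_q)$. By the standard recognition theorem for internal semidirect products, it suffices to establish four facts: (i) $\lp M_\alpha\rp$ has order $q^m-1$; (ii) $\lp Q\rp$ has order $m$; (iii) $\lp Q\rp$ normalizes $\lp M_\alpha\rp$, so that the product $K=\lp M_\alpha\rp\cdot\lp Q\rp$ is genuinely a subgroup with $\lp M_\alpha\rp\trianglelefteq K$; and (iv) $\lp M_\alpha\rp\cap\lp Q\rp=\{I_m\}$. Granting these, $K\cong\lp M_\alpha\rp\rtimes\lp Q\rp$ and the order formula $|K|=|\lp M_\alpha\rp|\cdot|\lp Q\rp|/|\lp M_\alpha\rp\cap\lp Q\rp|=m(q^m-1)$ drops out immediately.

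For (i) and (ii) I would iterate the two identities of Lemma \ref{M_alphaQ}. From $\epsB(\alpha\vx)=\epsB(\vx)M_\alpha$ one gets $M_{\alpha\beta}=M_\beta M_\alpha=M_\alpha M_\beta$ (field multiplication commutes), hence $M_\alpha^{\,j}=M_{\alpha^j}$, which equals $I_m$ exactly when $\alpha^j=1$; since $\alpha$ is primitive this first occurs at $j=q^m-1$, so $|\lp M_\alpha\rp|=q^m-1$. Likewise, iterating $\epsB(\vx^q)=\epsB(\vx)Q$ gives $\epsB(\vx^{q^k})=\epsB(\vx)Q^k$, so $Q^k=I_m$ iff $\vx^{q^k}=\vx$ for all $\vx$, i.e.\ iff $m\mid k$; thus $|\lp Q\rp|=m$.

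The crux is (iii), the conjugation relation between $Q$ and $M_\alpha$. I would compute $\epsB\bigl((\alpha a)^q\bigr)$ for $a\in\F_{q^m}$ in two ways. Applying multiplication-by-$\alpha$ and then $q$-exponentiation gives $\epsB((\alpha a)^q)=\epsB(\alpha a)Q=\epsB(a)\,M_\alpha Q$; on the other hand, writing $(\alpha a)^q=\alpha^q a^q$ and applying $q$-exponentiation first gives $\epsB(\alpha^q a^q)=\epsB(a^q)M_{\alpha^q}=\epsB(a)\,Q M_{\alpha^q}$. Since $\epsB$ is a bijection onto $\F_q^m$, these agree on every row vector, so $M_\alpha Q=Q M_{\alpha^q}=Q M_\alpha^{\,q}$, i.e.\ $Q^{-1}M_\alpha Q=M_\alpha^{\,q}\in\lp M_\alpha\rp$. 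Hence conjugation by $Q$ maps $\lp M_\alpha\rp$ into (and so onto) itself, $\lp Q\rp$ lies in the normalizer of $\lp M_\alpha\rp$, and consequently $\lp M_\alpha\rp\lp Q\rp$ is a subgroup with $\lp M_\alpha\rp$ normal. I expect the only delicate point to be keeping the side of the action and the order of the matrix factors consistent, since all these maps act on the right of row vectors; once that bookkeeping is fixed the relation is a short calculation.

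Finally, for (iv): if $M_\alpha^{\,j}=Q^k$ with $0\le j<q^m-1$ and $0\le k<m$, then as $\F_q$-linear maps on $\F_{q^m}$ we have $\alpha^j x=x^{q^k}$ for all $x$; evaluating at $x=1$ forces $\alpha^j=1$, hence $j=0$ and $M_\alpha^{\,j}=I_m$, and then $x=x^{q^k}$ for all $x$ forces $m\mid k$, hence $k=0$. So the intersection is trivial. Combining (i)--(iv) with the internal semidirect product recognition theorem gives $K\cong\lp M_\alpha\rp\rtimes\lp Q\rp$ together with $|K|=m(q^m-1)$, as claimed.
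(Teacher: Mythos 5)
Your proposal is correct and follows essentially the same route as the paper: both hinge on deriving the relation $M_\alpha Q = Q M_\alpha^{\,q}$ from the two identities of Lemma \ref{M_alphaQ}, concluding that $\lp Q\rp$ normalizes $\lp M_\alpha\rp$, and then combining trivial intersection with the product order formula. Your explicit computation of the orders of $\lp M_\alpha\rp$ and $\lp Q\rp$ and your evaluation-at-$x=1$ argument for trivial intersection are slightly more detailed than the paper's (which appeals to $Q^i$ being merely semi-linear while $M_\alpha^i$ is linear), but the substance is the same.
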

\begin{proof}
First we show $M_{\alpha} Q= QM_{\alpha}^{q} $.  Let $X \in \GL_m(\F_q)$ and set $\vx = \epsB^{-1}(X)$.  Repeated application of Lemma \ref{M_alphaQ} yields
\begin{eqnarray*}
XM_\alpha Q &=& \epsB(\vx)M_\alpha Q \\
&=& \epsB(\alpha \vx) Q\\
&=& \epsB\left((\alpha \vx)^{q}\right)\\
&=& \epsB\left(\alpha^q \vx^{q}\right)\\
&=& \epsB\left(\vx^{q}\right) M_{\alpha^q} \\
&=& \epsB(\vx)QM_\alpha^q\\
&=&XQM_\alpha^q.
\end{eqnarray*}
Multiplying both sides of the equation by $X^{-1}$ yields the desired result.

Hence $\lp Q \rp$ is contained in the normalizer of $\lp M_{\alpha} \rp$, and so $K$ is a subgroup of $\GL_m(\F_q)$ with
\[
|K| =\displaystyle \frac{| \lp M_{\alpha} \rp||\lp Q \rp|}{|\lp M_{\alpha} \rp\cap\lp Q \rp|}.
\]
Since $\epsB(\alpha \vx)Q^i =\epsB(\alpha^{q^i}\vx^{q^i}) \neq \epsB(\alpha\vx^{q^i})$, each $Q^i$ corresponds to a map on $\F_{q^m}^l$ that is merely $\F_{q^m}$-semi-linear, while each $M_{\alpha}^i$ corresponds to an $F_{q^m}$-linear map, and so we have that  $\lp M_{\alpha}\rp\cap\lp Q \rp = \{I_m\}$.  Thus, $|K|=m(q^m-1)$. Furthermore, since $\lp M_{\alpha} \rp $ and $\lp Q \rp$ have trivial intersection, we see that $K$ is an internal semi-direct product $ \lp M_{\alpha} \rp \rtimes \lp Q \rp$ with multiplication defined by 
\[
(M_{\alpha}^{i_1}; Q^{j_1})(M_{\alpha}^{i_2}; Q^{j_2}) = (M_{\alpha}^{i_1}M_{\alpha}^{i_2q^{j_2}}; Q^{j_1+j_2}).
\]
\end{proof}

\begin{theorem}\label{SLEquivRMSLEquivMat}
Let $C_1,C_2 \subseteq \F_{q^m}^l$ be rank-metric codes.  If $C_1$ and $C_2$ are semi-linearly rank-metric equivalent, then $\epsB(C_1)$ and $\epsB(C_2)$ are semi-linearly matrix equivalent for any basis $\vb$ of $\F_{q^m}$ over $\F_q$.

Conversely, for a fixed basis $\vb$ of $\F_{q^m}$ over $\F_q$, if $\epsB(C_1)$ and $\epsB(C_2)$ are semi-linearly matrix equivalent, then $C_1$ and $C_2$ are semi-linearly rank-metric equivalent if and only if there exists a map $g \in \SLEquiv_{\Mat}(\Matlm)$ that satisfies $g(\epsB(C_1))= \epsB(C_2)$ and has the form 
\[
g(A) = (LAMP_r)^{\sigma_p^r} \text{ for all }A \in \Matlm
\]
for some $L \in \GL_l(\F_q)$, $M \in K$, and $1 \leq r \leq e$, where $P_r$ and $\sigma_p$ are as in Lemma \ref{epsBL} and $K$ is as in Lemma \ref{Ksubgp}.
\end{theorem}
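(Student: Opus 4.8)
The plan is to prove both implications by transporting the equivalence map through the expansion $\epsB$. For the forward implication, observe that if $f \in \SLEquiv_{\RM}(\F_{q^m}^l)$ satisfies $f(C_1)=C_2$, then $g := \epsB \circ f \circ \epsB^{-1}$ is automatically a semi-linear matrix-equivalence map with $g(\epsB(C_1)) = \epsB(C_2)$: indeed $\epsB$ is an $\F_q$-linear rank-preserving bijection and $f$ is $\F_q$-semi-linear and rank-preserving, so the conjugate $g$ inherits both properties and is invertible. Thus the only genuine content is to put $g$ into the normal form $(LAMP_r)^{\sigma_p^r}$, and conversely to show that any $g$ of this form descends through $\epsB^{-1}$ to a semi-linear rank-metric-equivalence map.

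First I would fix the shape of $f$. By Propositions \ref{SLEquivRM} and \ref{LEquivRM}, every $f \in \SLEquiv_{\RM}(\F_{q^m}^l)$ may be written $f(\vx)=\alpha\,\vx^\tau L$ for some $\alpha\in\F_{q^m}^*$, $L\in\GL_l(\F_q)$, and $\tau\in\Gal(\F_{q^m}/\F_p)$ acting coordinatewise. I would decompose $\tau=\sigma_p^r\sigma_q^j$ with $1\le r\le e$ and $0\le j\le m-1$; this is a bijective reparametrization of the cyclic group $\Gal(\F_{q^m}/\F_p)$ of order $em$, under the convention $P_e:=Q$ (so that Lemma \ref{epsBL} extends to $r=e$, using that $\sigma_q$ fixes $\F_q$-valued matrices). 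Feeding the chain $\vx\mapsto\vx^{\sigma_q^j}\mapsto\vx^\tau\mapsto\vx^\tau L\mapsto\alpha\vx^\tau L$ through Lemmas \ref{M_alphaQ} and \ref{epsBL} converts each field operation into a matrix operation and gives
\[
\epsB(f(\vx)) = L^\top\bigl(\epsB(\vx)\,Q^jP_r\bigr)^{\sigma_p^r}M_\alpha .
\]
Pushing $L^\top$ and $M_\alpha$, both $\F_q$-valued, inside the entrywise automorphism yields $\epsB(f(\vx))=\bigl(L'\epsB(\vx)\,Q^jP_r\widehat{M}\bigr)^{\sigma_p^r}$, where $L'=(L^\top)^{\sigma_p^{-r}}\in\GL_l(\F_q)$ and $\widehat{M}=M_\alpha^{\sigma_p^{-r}}$.

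The main obstacle is that $\widehat{M}$ sits on the wrong side of $P_r$ and must be absorbed into $K$ to reach the claimed form, in which $M\in K$ precedes a rightmost $P_r$. I would resolve this with a conjugation identity for $P_r$. Translating the field-level relation $\sigma_p^r\circ(\text{mult by }\beta)\circ\sigma_p^{-r}=(\text{mult by }\beta^{\sigma_p^r})$ through $\epsB$, exactly as $M_\alpha Q=QM_\alpha^q$ was derived inside the proof of Lemma \ref{Ksubgp}, gives $P_rM_\alpha^{\sigma_p^{-r}}P_r^{-1}=M_{\alpha^{\sigma_p^{-r}}}$, a scalar-multiplication matrix and hence an element of $\lp M_\alpha\rp\subseteq K$. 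Consequently $Q^jP_r\widehat{M}=\bigl(Q^jM_{\alpha^{\sigma_p^{-r}}}\bigr)P_r$, and since $K=\lp M_\alpha\rp\rtimes\lp Q\rp$ is a group, $M:=Q^jM_{\alpha^{\sigma_p^{-r}}}\in K$. This produces $g(A)=(L'AMP_r)^{\sigma_p^r}$ of the required form, settling the forward implication together with the ``only if'' half of the converse at once.

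For the ``if'' half of the converse I would run the computation in reverse. Given $g(A)=(LAMP_r)^{\sigma_p^r}$ with $g(\epsB(C_1))=\epsB(C_2)$, I would write $M=M_\beta Q^j$ via Lemma \ref{Ksubgp} and peel off the matrix operations from the outside in through $\epsB^{-1}$, using the inverse forms of Lemmas \ref{M_alphaQ} and \ref{epsBL}:
\[
\epsB^{-1}\!\bigl(g(\epsB(\vx))\bigr)=\bigl(\beta^{q^j}\vx^{q^j}L^\top\bigr)^{\sigma_p^r}=\beta^\tau\,\vx^\tau\,(L^\top)^{\sigma_p^r},
\]
where $\tau=\sigma_p^{ej+r}$ and $(L^\top)^{\sigma_p^r}\in\GL_l(\F_q)$. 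Setting $f(\vx)=\beta^\tau\vx^\tau(L^\top)^{\sigma_p^r}$ gives a semi-linear rank-metric-equivalence map with $f(C_1)=\epsB^{-1}(g(\epsB(C_1)))=C_2$, so $C_1$ and $C_2$ are semi-linearly rank-metric equivalent. The point requiring constant care is the bookkeeping of which automorphism, the partial $\sigma_p^r$ versus the full $\tau$, acts on each $\F_q$-valued matrix; because $\sigma_q$ fixes $\F_q$, the factor $L$ and the basis-change data $P_r$ behave correctly, which is precisely what lets the normal form close up.
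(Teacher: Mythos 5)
Your proof is correct and follows essentially the same route as the paper: factor the semi-linear map via Propositions \ref{SLEquivRM} and \ref{LEquivRM}, split the Galois element as $\sigma_q^j\sigma_p^r$, push each factor through $\epsB$ using Lemmas \ref{M_alphaQ} and \ref{epsBL}, and recognize $M_\alpha Q^j$ as an element of $K$; the reverse (``if'') direction is the same peeling-off computation the paper performs. The only divergence is your parametrization $f(\vx)=\alpha\vx^\tau L$ with the automorphism applied first: the paper's convention $f(\vx)=(\alpha\vx L)^\gamma$ puts the matrix factors inside the final entrywise $\sigma_p^r$ from the start, so the normal form $\left(L^\top A\, M_\alpha Q^jP_r\right)^{\sigma_p^r}$ drops out directly and the (correct) conjugation identity $P_rM_\alpha^{\sigma_p^{-r}}P_r^{-1}=M_{\alpha^{\sigma_p^{-r}}}$ that you need to move $\widehat{M}$ past $P_r$ never arises.
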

\begin{proof}
Let $C_1,C_2 \subseteq \F_{q^m}^l$ be semi-linearly rank-metric equivalent codes.  Then there exists some $f \in \SLEquiv_{\RM}(\F_{q^m}^l)$ with $f(C_1) = C_2$. By Proposition \ref{SLEquivRM}, $f$ has a representative in $\left(\left(\F_{q^m}^* \times \GL_l(\F_q)\right) /N\right) \rtimes \Gal(\F_{q^m} / \F_p)$ of the form $([\alpha, L]; \gamma)$ for some $\alpha \in \F_{q^m}^*$, $L \in \GL_l(\F_q)$, and $\gamma \in \Gal(\F_{q^m} / \F_p)$, where $N = \{ (\lambda,~\lambda^{-1} I_l) ~|~ \lambda \in \F_q^*\}$, and we have $C_2 = (\alpha C_1 L)^\gamma:= \{ (\alpha \vx L)^\gamma ~|~ \vx \in C_1\}$.  Since $\Gal(\F_{q^m} / \F_p) = \{ \sigma_p^i~|~1 \leq i \leq me\}$, there exists some $1 \leq i \leq me$ such that $\gamma = \sigma_p^i$; write $i = ej+r$ with $0 \leq r \leq e-1$, so that
\[
\gamma =  \sigma_p^{ej+r} = (\sigma_p^e)^j\sigma_p^r = (\sigma_{p^e})^j\sigma_p^r = \sigma_q^j\sigma_p^r.
\]
Let $\vb$ be an arbitrary basis for $\F_{q^m}$ over $\F_q$.  Then
\[
\begin{array}{lcll}
\epsB(C_2) &=& \epsB((\alpha C_1 L)^\gamma)& \\
&=& \epsB((\alpha C_1 L)^{\sigma_q^j\sigma_p^r})& \\
&=& \left(\epsB((\alpha C_1 L)^{\sigma_q^j})P^r\right)^{\sigma_p^r} & \text{by Lemma \ref{epsBL}}\\
&=& \left(\epsB(\alpha C_1 L)Q^jP^r\right)^{\sigma_p^r} & \text{by Lemma \ref{M_alphaQ}}\\
&=&\left(\epsB(C_1 L) M_{\alpha}Q^jP^r\right)^{\sigma_p^r}  & \text{by Lemma \ref{M_alphaQ}} \\
&=& \left(L^\top \epsB(C_1) M_{\alpha}Q^jP^r\right)^{\sigma_p^r}  & \text{by Lemma \ref{epsBL}}.
\end{array}
\]
Define $g: \Matlm \to \Matlm$ by $g(A) = (L^\top A M_{\alpha}Q^jP_r)^{\sigma_p^r}$ for all $A \in \Matlm$.  Since $L^\top \in \GL_l(\F_q)$, $M_{\alpha}Q^jP_r \in \GL_m(\F_q)$, and $\sigma_p^r \in \Gal(\F_q/\F_p)$, we have that $g \in \SLEquiv_{\Mat}(\Matlm)$ by Propostion \ref{SLEquivMat}.  Thus, $g$ is a semi-linear matrix equivalence map and $g(\epsB(C_1)) = \epsB(C_2)$.  Hence $\epsB(C_1)$ and $\epsB(C_2)$ are semi-linearly matrix equivalent for any basis $\vb$ of $\F_{q^m}$ over $\F_q$, and so the first result holds.

Now assume that $\epsB(C_1)$ and $\epsB(C_2)$ are semi-linearly matrix equivalent where $\vb$ is some fixed basis for $\F_{q^m}$ over $\F_q$.  If, in addition, $C_1$ and $C_2$ are semi-linearly rank-metric equivalent, then there exists some $f \in \SLEquiv_{\RM}(\F_{q^m}^l)$ with $f(C_1) = C_2$.  As above, 
$f$ has a representative in $\left(\left(\F_{q^m}^* \times \GL_l(\F_q)\right) /N\right) \rtimes \Gal(\F_{q^m} / \F_p)$ of the form $([\alpha, L]; \gamma)$ for some $\alpha \in \F_{q^m}^*$, $L \in \GL_l(\F_q)$, and $\gamma = \sigma_q^j\sigma_p^r \in \Gal(\F_{q^m}/\F_p)$, and so by the same reasoning as above
\[
\epsB(C_2) = \left(L^\top \epsB(C_1) M_{\alpha}Q^jP^r\right)^{\sigma_p^r}.
\]
Define $g: \Matlm \to \Matlm$ by $g(A) = (L^\top A M_{\alpha}Q^jP_r)^{\sigma_p^r}$ for all $A \in \Matlm$.  Then $g \in \SLEquiv_{\Mat}(\Matlm)$. Since  $(M_{\alpha}Q^jP_r )P_r^{-1} = M_{\alpha}Q^j \in K$, $g$ satisfies the conditions of the statement of the theorem, and so the result holds.  

Conversely, suppose that there exists some $g \in  \LEquiv_{\Mat}(\Matlm)$ with $g(\epsB(C_1))=\epsB(C_2)$ such that $g$ has the form $g(A) = (L A MP_r)^{\sigma_p^r}$ for all $A \in \Matlm$ for some $L \in \GL_l(\F_q)$, $M \in K$, and $1\leq r \leq e$. Since $M \in K$ and $K = \lp M_{\alpha} \rp \rtimes \lp Q \rp$, we can write $M=M_{\alpha}^i Q^j$ for some $1 \leq i \leq q^m-1$ and $ 1 \leq j \leq m$, and we have
\begin{eqnarray*}
\epsB(C_2) &=& g(\epsB(C_1)) \\ 
&=&(L \epsB(C_1)M P_r)^{\sigma_p^r}\\
&=&(L \epsB(C_1)M_{\alpha}^iQ^j P_r)^{\sigma_p^r}\\ 
&=&( \epsB(\alpha^iC_1L^\top)Q^j P_r)^{\sigma_p^r}\\  
&=& \epsB\left((\alpha^iC_1L^\top)^{\sigma_q^j}\right) P_r)^{\sigma_p^r}\\  
&=&\epsB\left((\alpha^iC_1L^\top)^{\sigma_q^j\sigma_p^r}\right)\\  
&=& \epsB\left( (\alpha^iC_1L^\top)^{\sigma_p^{ej+r}}\right).   
\end{eqnarray*}

Define $f: \F_{q^m}^l \to \F_{q^m}^l$ by $f(\vx) = (\alpha^i \vx L^\top)^{\sigma_p^{ej+r}}$ for all $\vx \in \F_{q^m}^l$.  Then $f \in \SLEquiv_{\RM}(\F_{q^m}^l)$ by Proposition \ref{SLEquivRM} and, applying $\epsB^{-1}$ to both sides of the previous equality, we see $f(C_1)=C_2$.  Hence $C_1$ and $C_2$ are semi-linearly rank-metric equivalent, and so the second result holds.

\end{proof}

\section{Automorphism Groups of Rank-Metric and Matrix Codes}\label{autgroupssection}

With the notion of rank-metric- and matrix-equivalence maps in place, we now examine the collection of such maps that leave a given code fixed.  This collection forms a group, known as the \emph{automorphism group} of the code, which is valuable for the enumeration of inequivalent codes.    
 \begin{definition}\label{autgroups}
%
 The \emph{linear rank-metric-automorphism group} of a rank-metric code $C \subseteq \F_{q^m}^l$ is the set of linear rank-metric-equivalence maps $f \in \LEquiv_{\RM}(\F_{q^m}^l)$ such that $f(C)=C$; this group is denoted $\LAut_{\RM}(C)$.  Similarly, the \emph{semi-linear rank-metric-automorphism group} of a rank-metric code $C \subseteq \F_{q^m}^l$ is the set of semi-linear rank-metric-equivalence maps $f \in \SLEquiv_{\RM}(\F_{q^m}^l)$ such that $f(C)=C$; this group is denoted $\SLAut_{\RM}(C)$.    
  
  The \emph{linear matrix-automorphism group} of a matrix code $C \subseteq \Matlm$ is the set of linear matrix-equivalence maps $f \in \LEquiv_{\Mat}(\Matlm)$ such that $f(C)=C$; this group is denoted $\LAut_{\Mat}(C)$.  Similarly, the \emph{semi-linear matrix-automorphism group} of a matrix code $C \subseteq \Matlm$ is the set of semi-linear matrix-equivalence maps $f \in \SLEquiv_{\Mat}(\Matlm)$ such that $f(C)=C$; this group is denoted $\SLAut_{\Mat}(C)$.    
  \end{definition}

\subsection{Rank-Metric Automorphism Groups of Gabidulin Codes}\label{RMAutGroup}
In this section, we will examine the linear rank-metric automorphism groups of Gabidulin codes when they are viewed as rank-metric codes; in the next section we will characterize the linear matrix-automorphism group of these codes when they are viewed as matrix codes via the map $\epsB$.  We focus on these codes specifically because they are the most well-known construction of rank-metric codes, and thus also the most widely used.  

Before developing the theory of automorphism groups of Gabidulin codes, we first review two results from  \cite{Berger03}. 
The first result characterizes precisely when two Gabidulin vectors $\vg$ and $\vg'$ determine the same Gabidulin code.
\begin{theorem}[\cite{Berger03}, Theorem 2]\label{BergerTheorem2}
 Let $\vg, \vg' \in \F_{q^m}^l$ be Gabidulin vectors.  For any $k$ with $1\leq k < l <m$, the Gabidulin codes $C_{k,\vg, q^m}, C_{k,\vg', q^m} \subseteq \F_{q^m}^l$ are equal if and only if there exists a scalar $\alpha \in \F_{q^m}^*$ such that $\vg'=\alpha \vg$.
\end{theorem}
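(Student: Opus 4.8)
The plan is to work directly with the Moore (Frobenius) structure of the generator matrices, writing $G$ for the generator matrix of $C_{k,\vg,q^m}$, so that its $(s,i)$ entry is $g_i^{[s]}=g_i^{q^s}$ for $0\le s\le k-1$ and $1\le i\le l$, and $G'$ for the corresponding matrix built from $\vg'$. The easy direction is $(\Leftarrow)$: if $\vg'=\alpha\vg$ with $\alpha\in\F_{q^m}^*$, then $(g_i')^{[s]}=(\alpha g_i)^{q^s}=\alpha^{[s]}g_i^{[s]}$, so $G'=DG$ with $D=\mathrm{diag}(\alpha^{[0]},\dots,\alpha^{[k-1]})$ invertible over $\F_{q^m}$; hence $G$ and $G'$ have the same row space and the codes coincide. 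All the content lies in the converse.

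For $(\Rightarrow)$ I would first record the linearized-polynomial description: a codeword of $C_{k,\vg,q^m}$ is $(f(g_1),\dots,f(g_l))$ as $f(x)=\sum_{t=0}^{k-1}a_t x^{q^t}$ ranges over $q$-linearized polynomials of $q$-degree $<k$. Since $g_1,\dots,g_l$ are $\F_q$-independent they span an $\F_q$-space $V$ with $|V|=q^l$, and a nonzero such $f$ has at most $q^{k-1}<q^l$ roots (its roots form an $\F_q$-subspace of dimension $\le k-1$); so the generator map is injective, $G$ has full rank $k$, and equality of the codes gives $G'=UG$ for a unique $U=[U_{s,t}]\in\GL_k(\F_{q^m})$. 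Explicitly,
\[
(g_i')^{q^s}=\sum_{t=0}^{k-1}U_{s,t}\,g_i^{q^t}\qquad(0\le s\le k-1,\ 1\le i\le l).
\]

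The crux is to exploit that Frobenius intertwines consecutive rows: raising the $s$-th relation to the $q$-th power must reproduce the $(s{+}1)$-st relation. Subtracting the two resulting expressions for $(g_i')^{q^{s+1}}$ shows that the linearized polynomial
\[
\Phi_s(x)=\sum_{t=0}^{k-1}U_{s,t}^{\,q}\,x^{q^{t+1}}-\sum_{t=0}^{k-1}U_{s+1,t}\,x^{q^t}
\]
vanishes on every $g_i$, hence on all of $V$. As $\Phi_s$ has $q$-degree at most $k<l$, the same root count forces $\Phi_s\equiv 0$ for each $0\le s\le k-2$. Matching coefficients then yields $U_{s+1,0}=0$, $U_{s,k-1}=0$, and the shift relation $U_{s+1,j}=U_{s,j-1}^{\,q}$. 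Iterating the shift gives $U_{s,t}=U_{0,\,t-s}^{\,q^s}$ for $t\ge s$ and $0$ otherwise; combining this with $U_{s,k-1}=U_{0,\,k-1-s}^{\,q^s}=0$ for $0\le s\le k-2$ forces $U_{0,1}=\dots=U_{0,k-1}=0$. Hence $U$ is diagonal with entries $U_{0,0}^{\,q^s}$, and the $s=0$ relation reads $g_i'=U_{0,0}\,g_i$ for all $i$; taking $\alpha=U_{0,0}\neq0$ (by invertibility of $U$) gives $\vg'=\alpha\vg$.

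I expect the main obstacle to be precisely this determination of $U$. The tempting shortcut---combining the linearized polynomials $f,f'$ satisfying $g_i'=f(g_i)$ and $g_i=f'(g_i')$ into $f'\circ f=\mathrm{id}$ on $V$---fails, because $f'\circ f$ has $q$-degree up to $2k-2$, which may exceed $l$, so root-counting no longer forces $f'\circ f-x$ to vanish identically. The Frobenius-compatibility recursion on the rows of $U$ is exactly what keeps each auxiliary polynomial $\Phi_s$ of $q$-degree $\le k<l$, so that the root count applies at every stage and collapses $U$ to the Frobenius diagonal of a single scalar.
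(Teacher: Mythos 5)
The paper does not actually prove this statement: it is quoted verbatim from Berger (\cite{Berger03}, Theorem~2) and used as a black box in the proof of Theorem~\ref{LAutRMGab}, so there is no in-paper argument to compare yours against. Judged on its own, your proof is correct and complete. The backward direction via $G'=DG$ with $D=\mathrm{diag}(\alpha^{[0]},\dots,\alpha^{[k-1]})$ is right. In the forward direction, the root-count $q^{k-1}<q^l$ correctly gives injectivity of the evaluation map and hence a unique $U\in\GL_k(\F_{q^m})$ with $G'=UG$; the Frobenius-compatibility polynomials $\Phi_s$ have $q$-degree at most $k<l$, so vanishing on the $q^l$-element space spanned by the $g_i$ does force $\Phi_s\equiv 0$, and the resulting coefficient identities ($U_{s+1,0}=0$, $U_{s,k-1}=0$ for $s\le k-2$, and the shift $U_{s+1,j}=U_{s,j-1}^{\,q}$) do collapse $U$ to $\mathrm{diag}\bigl(U_{0,0}^{\,q^s}\bigr)$, whence $\vg'=U_{0,0}\vg$ (the $k=1$ case being immediate since $U$ is then a nonzero scalar). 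Your closing remark is also well taken: the naive composition $f'\circ f$ has $q$-degree up to $2k-2$, which can exceed $l$, so the row-by-row Frobenius recursion is genuinely needed to keep every auxiliary polynomial within range of the root count. This is essentially the standard Moore-matrix argument, and it stands as a valid self-contained replacement for the citation.
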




The next lemma characterizes the effect of right multiplication by a non-singular matrix on a Gabidulin code, which will be useful in the sequel.
\begin{lemma}[Lemma 3 in  \cite{Berger03}]\label{BergerLemma3}
Let $f=[1, L] \in \LEquiv_{\RM}(\F_{q^m}^l)$, and fix $k$ with $1\leq k < l\leq m$.  For any Gabidulin vector $\vg\in \F_{q^m}^l$ with corresponding Gabidulin code $C_{k,\vg, q^m}$, we have $f(C_{k, \vg, q^m}) = C_{k,\vg L, q^m}$.
\end{lemma}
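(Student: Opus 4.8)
The plan is to produce an explicit generator matrix for $f(C_{k,\vg,q^m})$ and to recognize it as a generator matrix of $C_{k,\vg L, q^m}$ in the form prescribed by Definition \ref{GabCodeDef}. First I would recall, using Proposition \ref{LEquivRM} and the convention for the action of $[\alpha,L]$ recorded after it, that the map $f=[1,L]$ acts on $\F_{q^m}^l$ by $\vx \mapsto \vx L$. Writing $G$ for the generator matrix of $C_{k,\vg,q^m}$, whose $(t+1)$-st row is $\vg^{[t]}=(g_1^{[t]},\ldots,g_l^{[t]})$ for $0 \le t \le k-1$, I would note that $C_{k,\vg,q^m}=\rowspan_{\F_{q^m}} G=\{\vu G \mid \vu \in \F_{q^m}^k\}$, so that
\[
f(C_{k,\vg,q^m})=\{\vu G L \mid \vu \in \F_{q^m}^k\}=\rowspan_{\F_{q^m}}(GL).
\]

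The crux of the argument is the identity $(\vg L)^{[i]}=\vg^{[i]} L$ for every $i$, where $(\vg L)^{[i]}$ denotes the entrywise $q^i$-power of the vector $\vg L$. I would establish it by noting that the entries of $L$ lie in $\F_q$ and are therefore fixed by the Frobenius $a \mapsto a^q$: for each column index $j$,
\[
\Big(\sum_{s=1}^l g_s L_{sj}\Big)^{q^i}=\sum_{s=1}^l g_s^{q^i} L_{sj}^{q^i}=\sum_{s=1}^l g_s^{[i]} L_{sj},
\]
since $L_{sj}^{q^i}=L_{sj}$. Applying this row by row shows that the $(t+1)$-st row of $GL$ equals $\vg^{[t]} L=(\vg L)^{[t]}$; that is, $GL$ is exactly the Gabidulin generator matrix built from the vector $\vg L$.

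It then remains to check that $\vg L$ is itself a Gabidulin vector, so that $C_{k,\vg L, q^m}$ is defined. Since $L \in \GL_l(\F_q)$, the entries of $\vg L$ are $\F_q$-linear combinations of the entries of $\vg$ with invertible coefficient matrix $L$; as the entries of $\vg$ are linearly independent over $\F_q$, so are those of $\vg L$. Consequently $\rowspan_{\F_{q^m}}(GL)=C_{k,\vg L, q^m}$ by Definition \ref{GabCodeDef}, and combining this with the first display yields $f(C_{k,\vg,q^m})=C_{k,\vg L, q^m}$.

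I expect the only real obstacle to be the Frobenius-commutation identity, and in particular making explicit that it depends essentially on $L$ having entries in $\F_q$ rather than in $\F_{q^m}$: were $L$ permitted entries in the extension field, $GL$ would no longer have the Moore-matrix (Gabidulin) form and the conclusion would break down. Everything else is routine bookkeeping about row spaces and the preservation of $\F_q$-linear independence under $\GL_l(\F_q)$.
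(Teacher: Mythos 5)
Your proof is correct. The paper itself states this lemma as an imported result (Lemma 3 of \cite{Berger03}) and gives no proof of its own, so there is nothing to diverge from; your argument --- right multiplication by $L$ sends $\rowspan_{\F_{q^m}}G$ to $\rowspan_{\F_{q^m}}(GL)$, the Frobenius-commutation identity $(\vg L)^{[i]}=\vg^{[i]}L$ (valid precisely because the entries of $L$ lie in $\F_q$) shows $GL$ is again a Moore-type generator matrix, and invertibility of $L$ over $\F_q$ preserves the $\F_q$-linear independence needed for $\vg L$ to be a Gabidulin vector --- is exactly the standard argument behind Berger's lemma.
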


In Theorem \ref{LAutRMGab} we precisely characterize the linear rank-metric equivalence maps that fix a Gabidulin code, in other words, we give a complete characterization of the linear rank-metric automorphism group of a Gabidulin code.  Specifically, we show that the only linear rank-metric equivalence maps that fix a code are those of the form $[\alpha, M_\beta]$ where $\alpha \in \F_{q^m}^*$ and the matrix $M_\beta \in \GL_l(\F_q)$ is such that $\vg \cdot M_\beta = \beta \vg$, in other words the matrices that fix the code must produce the effect of scalar multiplication on the defining Gabidulin vector $\vg$.  

\begin{theorem}\label{LAutRMGab}
Let $1\leq k< l \leq m$.  Let $\vg  \in \F_{q^m}^l$ be a Gabidulin vector with corresponding Gabidulin code $C_{k, \vg, q^m}$.  Let $d$ be the largest integer such that $W:=\Span_{\F_q} \{g_1, \ldots, g_l\}$ is a vector space over $\F_{q^d} \subseteq \F_{q^m}$.  Then
\begin{enumerate}
\item $d$ divides $\gcd(l,m)$.
\item $\LAut_{\RM}(C_{k, \vg, q^m})~\cong~{ \scriptsize \left\{ \left[\alpha, M_{\beta}\right]~|~ \alpha \in \F_{q^m}^*,~\beta\in \F_{q^d}^*\right\}}$,\\
where $M_{\beta} = \left(\epsilon_{\vg}\left(\beta \vg\right)\right)^\top  := {\scriptsize \begin{bmatrix} \epsg(\beta g_1) \\ \vdots \\ \epsg(\beta g_l) \end{bmatrix}}^\top$.
\end{enumerate}
\end{theorem}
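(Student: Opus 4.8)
The plan is to prove the two parts separately, treating part (1) as a short field-theoretic remark and devoting the real work to part (2), where the cited results of Berger do most of the structural lifting and the new content is an identification of a certain stabilizer set with a subfield.

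For part (1), the existence of the subfield $\F_{q^d} \subseteq \F_{q^m}$ already forces $d \mid m$, since the subfields of $\F_{q^m}$ are exactly the $\F_{q^{d'}}$ with $d' \mid m$. Moreover $W$ is by definition a vector space over $\F_{q^d}$, and $\dim_{\F_q} W = l$ because the entries of $\vg$ are $\F_q$-linearly independent; since $[\F_{q^d}:\F_q] = d$, the $\F_q$-dimension $l$ must be a multiple of $d$, giving $d \mid l$. Together these yield $d \mid \gcd(l,m)$.

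For part (2), the first step is to decouple the scalar $\alpha$ from the matrix $L$. I would note that scalar multiplication fixes every Gabidulin code: a codeword of $C_{k,\vg,q^m}$ is an $\F_{q^m}$-linear combination $\sum_{i=0}^{k-1} c_i \vg^{[i]}$ of the rows of the generator matrix, and multiplying by $\alpha \in \F_{q^m}^*$ merely rescales the coefficients, so $\alpha\, C_{k,\vg,q^m} = C_{k,\vg,q^m}$. Hence $[\alpha, I_l] \in \LAut_{\RM}(C_{k,\vg,q^m})$ for every $\alpha$, and since $[\alpha, L] = [\alpha, I_l][1, L]$ in $\LEquiv_{\RM}(\F_{q^m}^l)$, we have $[\alpha, L] \in \LAut_{\RM}$ if and only if $[1, L] \in \LAut_{\RM}$, with $\alpha$ then free. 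Next, by Lemma~\ref{BergerLemma3} the map $[1,L]$ sends $C_{k,\vg,q^m}$ to $C_{k,\vg L, q^m}$, so it fixes the code iff $C_{k,\vg L,q^m} = C_{k,\vg,q^m}$, which by Theorem~\ref{BergerTheorem2} holds iff $\vg L = \beta\vg$ for some $\beta \in \F_{q^m}^*$. Unwinding this scalar relation: if $\vg L = \beta\vg$ with $L \in \GL_l(\F_q)$ then each $\beta g_j$ is an $\F_q$-combination of the $g_i$, so $\beta W \subseteq W$, hence $\beta W = W$ as $W$ is finite-dimensional and $\beta \neq 0$; conversely, if $\beta W = W$ then multiplication by $\beta$ is an $\F_q$-automorphism of $W$, and expressing it in the basis $g_1,\dots,g_l$ gives the unique $L = M_\beta = (\epsg(\beta\vg))^\top$ with $\vg M_\beta = \beta\vg$ (uniqueness because $\vg$ is an $\F_q$-basis of $W$). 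Thus the admissible matrices are exactly the $M_\beta$ for $\beta$ in $S^\ast := \{\beta \in \F_{q^m}^\ast : \beta W = W\}$.

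The heart of the argument, and the step I expect to be the main obstacle, is identifying $S := \{\beta \in \F_{q^m} : \beta W \subseteq W\}$ with $\F_{q^d}$. I would show $S$ is a subfield of $\F_{q^m}$: it contains $\F_q$ (as $W$ is $\F_q$-linear), it is closed under addition and multiplication, and if $\beta \in S$ is nonzero then $\beta W = W$ forces $\beta^{-1}W = W$, so $\beta^{-1} \in S$; since every nonzero element of $S$ is invertible in $S$, it is a field. From $\F_q \subseteq S \subseteq \F_{q^m}$ we get $S = \F_{q^{d'}}$ for some $d' \mid m$. Now $W$ is a vector space over $S = \F_{q^{d'}}$ (the condition $\F_{q^{d'}}W \subseteq W$ holds by construction of $S$), so maximality of $d$ gives $d' \leq d$; conversely $W$ being an $\F_{q^d}$-space means $\F_{q^d}W \subseteq W$, i.e.\ $\F_{q^d} \subseteq S = \F_{q^{d'}}$, whence $d \mid d'$ and $d \leq d'$. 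Therefore $S = \F_{q^d}$ and $S^\ast = \F_{q^d}^\ast$, and combining with the reduction above yields $\LAut_{\RM}(C_{k,\vg,q^m}) = \{[\alpha, M_\beta] : \alpha \in \F_{q^m}^\ast,\ \beta \in \F_{q^d}^\ast\}$, which is stronger than the claimed isomorphism. I would finally check the boundary case $l = m$, where $W = \F_{q^m}$, every $\beta$ stabilizes $W$, and $d = m$ with $\F_{q^d}^\ast = \F_{q^m}^\ast$; here the $(\Leftarrow)$ direction follows from the direct rescaling computation for $C_{k,\beta\vg,q^m}$, while the $(\Rightarrow)$ direction relies on Theorem~\ref{BergerTheorem2}, whose stated hypothesis $l<m$ should be noted as requiring a mild extension to $l=m$ (or, alternatively, this endpoint handled by a separate short argument).
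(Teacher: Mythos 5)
Your proof is correct and follows essentially the same route as the paper's: both reduce, via Lemma \ref{BergerLemma3} and Theorem \ref{BergerTheorem2}, to the condition $\vg L = \beta\vg$ for some $\beta\in\F_{q^m}^*$, and then identify the admissible $\beta$ with $\F_{q^d}^*$ while reading off $L=M_\beta$ and extracting part (1) from the dimension count. Your explicit verification that $S=\{\beta\in\F_{q^m} : \beta W\subseteq W\}$ is a subfield equal to $\F_{q^d}$ makes rigorous a maximality step the paper leaves implicit, and your flag on the $l=m$ endpoint of Theorem \ref{BergerTheorem2} is a fair observation, but neither changes the underlying argument.
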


\begin{proof}
We will prove part 2 by double-containment; part 1 will fall out of the proof of 2 along the way.  

\noindent$(\subseteq)$ Let $f=[\alpha, L] \in \LAut_{\RM}(C_{k,\vg,q^m})$.  Using the fact that $C_{k,\vg, q^m}$ is $\F_{q^m}$-linear together with Lemma \ref{BergerLemma3}, we have:
\[
C_{k,\vg, q^m} = f\left(C_{k,\vg, q^m}\right) = \left( \alpha C_{k,\vg, q^m}\right) L = C_{k,\vg, q^m} L = C_{k,\vg L, q^m}.
\]
Thus, by Theorem \ref{BergerTheorem2}, there exists some $\beta \in \F_{q^m}^*$ such that $\vg L = \beta \vg$.  

Let $W$ be as in the statement of the theorem.  Since each entry of $\vg L = \beta \vg$ is a linear combination of $g_1, \ldots, g_l$, each entry of $\vg L=\beta \vg$ lies in $W$, i.e.\  $\beta g_i \in W$ for $1 \leq i \leq l$.   For any $w \in W$, there exist scalars $b_1, \ldots, b_l \subseteq \F_q$ such that $w= \sum_{i=1}^l b_i g_i$.  Observe $\beta w = \beta \sum_{i=1}^l b_i g_i = \sum_{i=1}^l b_i (\beta g_i)$, which is an element of $W$ since $\beta g_i \in W$ for $1 \leq i\leq l$, and so $W$ is closed under scalar multiplication by $\beta$.   Repeating this argument, we have that $W$ is closed under scalar multiplication by all positive powers of $\beta$.   Hence $W$ is a vector space over $\F_q(\beta)$.  

As in the statement of the theorem, let $\F_{q^d}$ be the largest subfield of $\F_{q^m}$ over which $W$ is a vector space.  Then 
\[
 l= \dim_{\F_q}W = (\dim_{\F_q} \F_{q^d}) (\dim_{\F_{q^d}} W) = d (\dim_{\F_{q^d}} W),
 \]
and so $d$ divides $l$.  But also since $\F_q \subseteq \F_{q^d} \subseteq \F_{q^m}$, we have that $d$ divides $m$.  Hence $d$ divides $\gcd(l,m)$, proving part 1 of the theorem.  Let $M_{\beta}$ be as in the statement.  Then $\vg M_{\beta} = \beta\vg$ since $\beta\vg = \epsg^{-1}(\epsg(\beta\vg)) = \vg \epsg(\beta \vg)^\top = \vg M_{\beta}$ by the definition of $\epsg$ and $\epsg^{-1}$.  Thus, since $\beta \vg= \vg L$, we have that $L = M_{\beta}$, and so $f$ has the desired form.

\noindent$(\supseteq)$  Let $f=[\alpha,M_{\beta}]$ for some $\alpha \in \F_{q^m}^*$ and $\beta \in \F_{q^d}^*$, where $M_{\beta}$ and $d$ are as in the statement of the theorem.  Since $f\left(C_{k,\vg, q^m}\right) = \left(\alpha C_{k,\vg, q^m}\right) M_{\beta} = C_{k,\vg, q^m}M_{\beta}$, we must show that $C_{k,\vg, q^m} = C_{k,\vg, q^m}M_{\beta}$.

Let $W$ be as in the statement of the theorem.  By hypothesis, $W$ is a vector space over $\F_{q^d}$, and so it is closed under multiplication by $\beta \in \F_{q^{d}}^*$. By the definition of $\epsilon_{\vg}^{-1}$, we have 
$\beta\vg = \epsg^{-1}(\epsg(\beta\vg)) = \vg \epsg(\beta \vg)^\top = \vg M_{\beta}$, and so $\vg M_{\beta} = \beta\vg$.
Thus, we have
\[
\begin{array}{lcll}
f(C_{k,\vg, q^m} )&=&\left( \alpha C_{k,\vg, q^m}\right)M_{\beta} &\\
&=&C_{k,\vg,q^m}M_{\beta} & \text{since } C_{k,\vg,q^m} \text{ is } \F_{q^m}\text{-linear}\\
 &=& C_{k, \vg  M_{\beta} , q^m}& \textrm{by Lemma }\ref{BergerLemma3}\\
&=& C_{k, \beta \vg, q^m}& \textrm{since } \vg M_{\beta}  = \beta \vg \\
&=& C_{k, \vg, q^m}& \textrm{by Theorem }\ref{BergerTheorem2}.
\end{array}
\]
Thus, $f(C_{k,\vg, q^m} )=C_{k,\vg, q^m} $, and so $f \in \LAut_{\RM}(C_{k,\vg,q^m})$.  
\end{proof}

This type of characterization of the linear rank-metric automorphism group of Gabidulin codes was previously attempted in \cite{Berger03}.   There, Berger claimed to show that the only rank-metric equivalence maps that fix a Gabidulin code have the form $f=[\alpha, L]$ where $\alpha \in \F_{q^m}^*$ and $L$ is a scalar matrix over $\F_q$ [\cite{Berger03}, Theorem 3].    However, we have found a flaw in his proof and as Theorem \ref{LAutRMGab} illustrates, there are significantly more matrices that will fix the code.  Specifically, there are a number of other matrices that accomplish scalar multiplication on the Gabidulin vector $\vg$ that defines the code beyond simply the scalar matrices.  The following example gives a sample of the additional types of matrices that are present in the linear automorphism group of a Gabidulin code, thereby illustrating the main result of Theorem \ref{LAutRMGab} and highlighting its differences from [\cite{Berger03}, Theorem 3].    

\begin{example}\label{autexample}
Write $\F_{16}=\F_2[\omega]$ where $\omega$ is a root of the primitive polynomial $p(t)=1+t+t^4$, and fix the ordered basis $\vb=(1, \omega, \omega^2,\omega^3)$ for $\F_{16}$ as an $\F_2$-vector space.
Let $C$ be the rank-metric code generated by the single vector $\vg=(1,~\omega^5 )$, i.e. 
\[
C =  \text{rowspan}_{\F_{16}} \vg = \left\{  \omega^i(1,~\omega^5 ) ~|~ 0\leq i \leq 14\right\} \cup \left\{\left(0,~0\right) \right\}
\]
Since the entries of $\vg$ are linearly independent over $\F_2$, we have that $C$ is a 1-dimensional Gabidulin code.  Since $\omega^5$ has order $3$ in $\F_{16}^*$, $\omega^5$ is a primitive element for the unique subfield of $\F_{16}$ that is isomorphic to $\F_4$.   Thus, $\text{span}_{\F_2}\{ 1, \omega^5 \}$ is a 1-dimensional vector space over $\F_4$, and so it is possible to write any $\F_4$-scalar multiple of the vector $\vg$ using linear combinations of $1$ and $\omega^5$; in other words for any $\beta \in \F_4^*$, there exists a matrix $M_{\beta} \in \GL_2(\F_2)$ such that $\beta \vg = \vg M_{\beta}$.  For example, consider $\beta = \omega^5$.  Then
\begin{eqnarray*}
M_{\beta} &=& {\scriptsize \begin{bmatrix} \epsg(\beta g_1) \\ \epsg(\beta g_2) \end{bmatrix}}^\top ={\scriptsize \begin{bmatrix} \epsg(\omega^5) \\ \epsg(\omega^{10}) \end{bmatrix}}^\top = {\scriptsize\begin{bmatrix} 0&1 \\ 1&1 \end{bmatrix}}
\end{eqnarray*}

Define $f=[1, M_{\beta}]$.  Since $1 \in \F_{16}^*$ and $M_{\beta} \in \GL_2(\F_2)$, $f \in \LEquiv_{\RM}(\F_{q^m}^l)$ by Proposition \ref{LEquivRM}.  Note that $M_{\beta}$ is not a scalar matrix, and so by Berger's previous assertion [\cite{Berger03}, Theorem 3], $f$ should not be an automorphism of $C$.  However, 
\begin{eqnarray*}
f(C)&=& \text{rowspan}_{\F_{16}} {\scriptsize\left((1,~\omega^5 ) \begin{bmatrix} 0&1 \\1&1 \end{bmatrix} \right)} \\
&=&\text{rowspan}_{\F_{16}}{\scriptsize \left(( \omega^5,~1+\omega^5)\right)}\\
&=&\text{rowspan}_{\F_{16}}{\scriptsize \left(( \omega^5,~\omega^{10}) \right)}\\
&=& \text{rowspan}_{\F_{16}}{\scriptsize\left( \omega^5 (1,~\omega^5 )\right)}\\
&=& C
\end{eqnarray*}
Thus, $f \in \LAut_{\RM}(C)$ even though $f$ is not of the form $[\alpha, \lambda I_l]$ for any $\alpha \in \F_{16}^*$ and $\lambda \in \F_2^*$.
\end{example}


\subsection{Matrix Automorphism Groups of Gabidulin Codes}\label{MatAutGroup}
This section gives a partial characterization of the matrix automorphism group of an expanded Gabidulin code.  The matrix automorphism group turns out to be much more complicated than the rank-metric automorphism group of a Gabidulin code because matrix equivalence is much more general than rank-metric equivalence, as was shown in Theorems \ref{LEquivRMLEquivMat} and \ref{SLEquivRMSLEquivMat}.  The following examples illustrate some of the key complications that arise in the characterization of the matrix automorphism group. 

Recall that a Gabidulin code $C_{k,\vg, q^m}$ is $\F_{q^m}$-linear, and so $\alpha C_{k,\vg, q^m} = C_{k,\vg, q^m}$ for any $\alpha \in \F_{q^m}^*$.   Furthermore, in Lemma \ref{BergerLemma3}, we saw that $C_{k,\vg, q^m} L = C_{k,\vg L, q^m}$ for any $L \in \GL_l(\F_q)$.  Thus, the image of $C_{k,\vg, q^m}$ under any linear rank-metric equivalence map $[\alpha, L]$ is another Gabidulin code. Example \ref{imagenotGab} shows that this property need not hold for linear matrix equivalence maps acting on expanded Gabidulin codes.

\begin{example}\label{imagenotGab} 
Write $\F_{64}=\F_2[\omega]$ where $\omega$ is a root of the primitive polynomial $p(t)=t^6+t^4+t^3+t+1$.     We will use a normal basis for $\F_{64}$ over $\F_2$ since such a basis interacts well with the structure of a Gabidulin code.  Although $\omega$ is not a normal element, $\omega^{38}$ is normal, and so we fix the ordered basis $\vb$ for $\F_{64}$ over $\F_2$ as
\[
\vb=\left(\omega^{38}, (\omega^{38})^2, (\omega^{38})^4, (\omega^{38})^8, (\omega^{38})^{16} , (\omega^{38})^{32}\right) = \left(\omega^{38}, \omega^{13}, \omega^{26}, \omega^{52}, \omega^{41}, \omega^{19}\right).
\]
Let $C_{2,\vg,64}$ be the 2-dimensional Gabidulin code generated by $\vg=\left( \omega^{37}, \omega^{42}, \omega^{16}, \omega \right)$, and set 
\[
{\scriptsize L = \begin{bmatrix}0&1& 1& 0\\0& 1& 0 &0\\0 &0& 0& 1\\ 1& 1& 1& 0 \end{bmatrix} \in \GL_4(\F_2)}, \hspace{.1in}
{\scriptsize M=\begin{bmatrix} 1& 0& 0& 0 &1& 0\\1& 1& 0& 1& 0 &1\\1 &1& 1& 1& 1& 1\\0& 1& 1& 0& 0& 0\\1& 1& 1& 0& 1& 1\\1& 0& 0& 1& 0& 0 \end{bmatrix} \in\GL_6(\F_q).}
\]
By Proposition \ref{LEquivMat}, $[L,M] \in \LEquiv_{\Mat}(\Matlm)$, and so $L \epsB(C_{2,\vg,64}) M$ is linearly matrix equivalent to $\epsB(C_{2,\vg,64})$.  When we examine $\hC:=\epsB^{-1}\left( L \epsB(C_{2,\vg,64}) M \right)$, we find that $\hC$ is not $\F_{64}$-linear because $|\Span_{\F_{64}}\{ \widehat{\vc} \in \hC\}| =16777216 > 4096 = |\hC|$.  Since every Gabidulin code over $\F_{64}$ is $\F_{64}$-linear, we see that $\hC$ cannot be a Gabidulin code.  Thus, $\epsB(C_{2,\vg,64})$ is linearly matrix equivalent to a code that is not the matrix expansion of a Gabidulin code.
\end{example}

While we cannot guarantee that every linear matrix equivalence will take an expanded Gabidulin code to another expanded Gabidulin code, we can guarantee that any linear matrix equivalence map that corresponds to a rank-metric equivalence map, namely any of the maps outlined in Theorem \ref{LEquivRMLEquivMat}, will have this property.  Thus, we obtain the following result.

\begin{proposition}\label{LAutMatSubgp}
Let $1\leq k< l \leq m$ and fix an ordered basis $\vb$ of $\F_{q^m}$ over $\F_q$.  Let $\vg  \in \F_{q^m}^l$ be a Gabidulin vector and let $d$ be as in Theorem \ref{LAutRMGab}.  Let $f \in \LEquiv_{\Mat}(\Matlm)$ be of the form $f(A) = LAM_{\alpha} \text{ for all }A \in \Matlm$ for some $L \in \GL_l(\F_q)$ and some $M_{\alpha}$ as in Lemma \ref{M_alphaQ}.  Then $f \in \LAut_{\Mat}(\epsB(C_{k,\vg, q^m}))$ if and only if $L= \epsilon_{\vg}\left(\beta \vg\right)$ for some $\beta\in \F_{q^d}^*$.  Hence, 
\[
\{(\epsilon_{\vg}\left(\beta \vg\right), M_{\alpha})~|~ \beta \in \F_{q^d}^*,~\alpha \in \F_{q^m}^*\} \subseteq \LAut_{\Mat}(\epsB(C_{k,\vg, q^m})).
\]
\end{proposition}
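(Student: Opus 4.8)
The plan is to reduce the statement to the characterization of the rank-metric automorphism group already established in Theorem \ref{LAutRMGab}, exploiting the fact that a matrix map of the special shape $f(A) = LAM_{\alpha}$ is precisely one that is intertwined by $\epsB$ with a genuine $\F_{q^m}$-linear rank-metric equivalence map, exactly as in the converse part of Theorem \ref{LEquivRMLEquivMat}. Concretely, I will show that such an $f$ fixes the expanded code $\epsB(C_{k,\vg,q^m})$ if and only if the corresponding rank-metric map fixes $C_{k,\vg,q^m}$, and the latter condition is governed entirely by Theorem \ref{LAutRMGab}.

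First I would record the intertwining identity at the level of maps. For every $\vx \in \F_{q^m}^l$, Lemma \ref{epsBL} (applied with $L^\top$ in place of $L$) gives $\epsB(\vx L^\top) = L\,\epsB(\vx)$, while Lemma \ref{M_alphaQ} gives $\epsB(\alpha\vx) = \epsB(\vx)M_{\alpha}$; combining these yields
\[
f\bigl(\epsB(\vx)\bigr) = L\,\epsB(\vx)\,M_{\alpha} = \epsB\bigl(\alpha\,\vx L^\top\bigr).
\]
Setting $\tf(\vx) := \alpha\,\vx L^\top$, this says $f\circ\epsB = \epsB\circ\tf$, where $\tf \in \LEquiv_{\RM}(\F_{q^m}^l)$ is the rank-metric equivalence map with coset representative $[\alpha, L^\top]$ (legitimate by Proposition \ref{LEquivRM}, since $L^\top \in \GL_l(\F_q)$ and $\alpha \in \F_{q^m}^*$). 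Passing to the code gives $f\bigl(\epsB(C_{k,\vg,q^m})\bigr) = \epsB\bigl(\tf(C_{k,\vg,q^m})\bigr)$, and since $\epsB$ is a bijection, $f$ fixes $\epsB(C_{k,\vg,q^m})$ if and only if $\tf$ fixes $C_{k,\vg,q^m}$; that is, $f \in \LAut_{\Mat}(\epsB(C_{k,\vg,q^m}))$ exactly when $\tf \in \LAut_{\RM}(C_{k,\vg,q^m})$.

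With this reduction in hand I would invoke Theorem \ref{LAutRMGab}, which gives $\LAut_{\RM}(C_{k,\vg,q^m}) = \{[\alpha', M_{\beta}] : \alpha' \in \F_{q^m}^*,\ \beta \in \F_{q^d}^*\}$ with $M_{\beta} = (\epsg(\beta\vg))^\top$. Hence $\tf$, whose representative is $[\alpha, L^\top]$, lies in $\LAut_{\RM}(C_{k,\vg,q^m})$ precisely when $L^\top = M_{\beta}$ for some $\beta \in \F_{q^d}^*$, equivalently when $L = M_{\beta}^\top = \epsg(\beta\vg)$; note that $\alpha$ is unconstrained, reflecting that $M_{\alpha}$ implements $\F_{q^m}$-scalar multiplication, which always fixes the $\F_{q^m}$-linear code. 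This is exactly the asserted ``if and only if.'' The displayed containment is then immediate: for each $\beta \in \F_{q^d}^*$ and $\alpha \in \F_{q^m}^*$, the map $A \mapsto \epsg(\beta\vg)\,A\,M_{\alpha}$ is a linear matrix automorphism, so the coset $[\epsilon_{\vg}(\beta\vg), M_{\alpha}]$ lies in $\LAut_{\Mat}(\epsB(C_{k,\vg,q^m}))$.

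Since the conceptual work is carried by Theorem \ref{LAutRMGab}, I expect no serious obstacle; the care needed is purely bookkeeping, on two points. The first is tracking the transpose correctly through the intertwining, so that the rank-metric condition on $L^\top$ translates to the stated condition $L = \epsg(\beta\vg)$ on the matrix side. The second is the coset ambiguity: a priori $L^\top$ is determined only up to the $\F_q^*$-scalars in $N = \{(\lambda,\lambda^{-1}I_l)\}$, so one must verify this does not enlarge the solution set. This is harmless, because $\epsg$ is $\F_q$-linear, whence $\lambda^{-1}M_{\beta} = M_{\lambda^{-1}\beta}$ with $\lambda^{-1}\beta \in \F_{q^d}^*$ for $\lambda \in \F_q^*$, so the family $\{M_{\beta} : \beta \in \F_{q^d}^*\}$ is already closed under $\F_q^*$-scaling. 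Finally, for well-definedness I would note that $\epsg(\beta\vg) \in \GL_l(\F_q)$ whenever $\beta \in \F_{q^d}^*$, since $W$ is an $\F_{q^d}$-vector space and hence $\beta g_i \in W$ for all $i$, exactly as used in Theorem \ref{LAutRMGab}.
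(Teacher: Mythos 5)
Your proposal is correct and takes essentially the same route as the paper's proof: both conjugate $f$ by $\epsB$ using Lemmas \ref{M_alphaQ} and \ref{epsBL} to identify it with the rank-metric map $[\alpha, L^\top]$, and then reduce the automorphism condition to Theorem \ref{LAutRMGab}. The only additions are your explicit checks of the coset ambiguity modulo $N$ and the invertibility of $\epsilon_{\vg}(\beta\vg)$, which the paper leaves implicit.
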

\begin{proof}
By Theorem \ref{LEquivRMLEquivMat}, $\epsB^{-1}(f(\epsB(C_{k,\vg, q^m})))$ is rank-metric equivalent to $C_{k,\vg, q^m}$; specifically, by Lemmas \ref{M_alphaQ} and \ref{epsBL}, 
\[
\epsB^{-1}(f(\epsB(C_{k,\vg, q^m}))) = \epsB^{-1}(L \epsB(C_{k,\vg, q^m}) M_\alpha) = \alpha C_{k,\vg, q^m} L^\top = g(C_{k,\vg, q^m})
\]
where $g=(\alpha, L^\top) \in \LEquiv_{\RM}(\F_{q^m}^l).$  Thus, since $f \in \LAut_{\Mat}(\epsB(C_{k,\vg, q^m}))$ if and only if $f(\epsB(C_{k,\vg, q^m})) = \epsB(C_{k,\vg, q^m})$ if and only if $\epsB^{-1}(f(\epsB(C_{k,\vg, q^m}))) = C_{k,\vg, q^m}$ if and only if $g(C_{k,\vg, q^m}) = C_{k,\vg, q^m}$, we see
$f \in \LAut_{\Mat}(\epsB(C_{k,\vg, q^m}))$ if and only if $g \in \LAut_{\RM}(C_{k,\vg, q^m})$.  By Theorem \ref{LAutRMGab}, $g \in \LAut_{\RM}(C_{k,\vg, q^m})$ if and only if $g$ has the form $ \left(\gamma, \left(\epsilon_{\vg}\left(\beta \vg\right)\right)^\top\right)$ for some $\gamma \in \F_{q^m}^*$ and some $\beta \in \F_{q^d}^*$.  Thus, $g \in \LAut_{\RM}(C_{k,\vg, q^m})$ if and only if $L^\top =  \left(\epsilon_{\vg}\left(\beta \vg\right)\right)^\top$, and so the result holds. 
\end{proof}

A key feature of the rank-metric automorphism group of a Gabidulin code that we saw in Theorem \ref{LAutRMGab} is that the automorphism group has a direct product structure modulo the subgroup $N=\{(\lambda, \lambda^{-1}I_n)~|~\lambda~\in~\F_q^*\}$.  In other words, $[\alpha, L] \in \LAut_{\RM}(C_{k,\vg, q^m} )$ if and only if  $[\alpha, I_l] \in \LAut_{\RM}(C_{k,\vg, q^m} )$ and $[1, L] \in \LAut_{\RM}(C_{k,\vg, q^m} )$.  Proposition \ref{LAutMatSubgp} shows that $ \LAut_{\Mat}(\epsB(C_{k,\vg, q^m} ))$ contains a subgroup with this same direct product structure, but this does not guarantee that $ \LAut_{\Mat}(\epsB(C_{k,\vg, q^m} ))$ as a whole has this direct product structure.  In fact, Example \ref{notdirectproduct} gives an explicit example of a matrix automorphism group of an expanded Gabidulin code that violates this direct product structure.

\begin{example}\label{notdirectproduct}
We consider the same Gabidulin code from Example \ref{imagenotGab}, expanded with respect to the same basis for $\F_{64}$ over $\F_2$.  Set
\[
{\scriptsize L = \begin{bmatrix}0&1& 1& 0\\0& 1& 0 &0\\0 &0& 0& 1\\ 1& 1& 0& 0 \end{bmatrix} \in \GL_4(\F_2)}, \hspace{.4in}
{\scriptsize M=\begin{bmatrix} 0 &1 &0& 1& 0& 1\\0& 1& 0& 0& 1& 0\\0& 1& 0& 1& 0& 0\\1& 1& 1& 1 &1& 1\\0& 1& 0& 0& 0& 0\\ 1& 1 &0 &1 &1 &0 \end{bmatrix} \in\GL_6(\F_q).}
\]
One can check that $[L,M] \in \LAut_{\Mat}(\epsB(C_{2,\vg,64}))$.  When we examine the effect of $L$ alone, and apply Lemma \ref{BergerLemma3}, we find
\[
\begin{array}{lll}
\epsB^{-1}(L \epsB(\vg)) &=&\vg L^\top\\
&  =& (\omega, \omega^{37}+\omega^{42}+\omega, \omega^{37}, \omega^{16})\\
& =& (\omega, \omega^{14}, \omega^{37}, \omega^{16}) \notin C_{2,\vg,64},
\end{array}
\]
and so $L \epsB(C_{2,\vg,64}) \neq \epsB(C_{2,\vg,64})$.  Thus, $[L, I_m] \notin \LAut_{\Mat}(\epsB(C_{2,\vg,64}))$, and so $\LAut_{\Mat}(\epsB(C_{2,\vg,64}))$ does not have the structure of a direct product.
\end{example}

Thus, the group structure of $ \LAut_{\Mat}(\epsB(C_{k,\vg, q^m} ))$ is significantly more complicated than that of $ \LAut_{\RM}(C_{k,\vg, q^m} )$, which seems to be a reflection of the fact that matrix equivalence is strictly more general than rank-metric equivalence, as seen in Theorem \ref{LEquivRMLEquivMat}.

\section{Discussion}
Given the growing number of applications for random linear network coding, it is essential that methods of providing error correction for this form of network coding be further investigated.  K\"{o}tter and Kschischang demonstrate the error-correcting value of subspace codes; in particular, they establish the near-optimality of lifted rank-metric and lifted matrix codes in this context \cite{KK08}.  Since  lifted rank-metric and lifted matrix codes inherit their structure and distance distributions from the underlying rank-metric and matrix codes, further examination of these aspects of the underlying rank-metric and matrix codes is essential.  Toward this end, this paper has created a framework for classifying rank-metric and matrix codes in terms of their structural and distance properties.  This was accomplished by defining a notion of equivalence that preserves these properties and characterizing the sets of linear and semi-linear equivalence maps for both rank-metric- and matrix-equivalence.  We also characterize the subset of linear rank-metric equivalence maps that fix the family of rank-metric codes known as Gabidulin codes, and provide a partial characterization of the linear matrix-equivalence maps that fix the matrix codes obtained by expanding Gabidulin codes with respect to an arbitrary basis for the extension field over which these codes are defined.  One area of future research is to provide a complete characterization of this linear matrix-automorphism group for the expanded Gabidulin codes.  

Public-key cryptography provides another venue in which this analysis of the linear equivalence maps and their action on Gabidulin codes may prove valuable.  As outlined in Section \ref{background}, Gabidulin codes are widely used for generating subspace codes, but they have also found applications in defining a public-key cryptosystem, known as the GPT cryptosystem, analogous to the McEliece cryptosystem \cite{Gab08}.  In this setting, Gabidulin codes have proven valuable because they have high minimum distance and an efficient decoding algorithm, but are resistant to combinatorial decoding attacks by cryptanalysts when the code in use is unknown.  One drawback of these codes, however, is that their highly structured nature enables cryptanalysts, via Overbeck's attack \cite{Overbeck}, to recover the original code and crack the cryptosystem.  To attempt to disguise the structure of the code, a simple rank-metric-equivalence map, namely a permutation matrix over the base field, is employed in one updated version of the GPT cryptosystem; however, the permutation matrix still does not provide sufficient protection to resist Overbeck's attack \cite{Gab08}.  To circumvent this attack, Gabidulin proposed using a permutation matrix over an extension field, which no longer guarantees that the modified code will be equivalent to the original Gabidulin code, and thus the high minimum-distance property may be lost.  One possible alternative to this is the use of matrix-equivalence maps acting on the expanded Gabidulin code as a means to further disguise the structure of Gabidulin codes while still maintaining the distance distribution.  This is an important area of future research, in particular because the GPT cryptosystem has the potential to be a public-key cryptosystem that is impervious to the advent of quantum computing unlike the now-commonly used RSA public-key cryptosystem.  

\section{Acknowledgments}
The results in this paper are taken from the author's thesis \cite{MyThesis}.  The author would also like to thank her advisor Judy L.\ Walker for her continued help and insight.
 \bibliographystyle{plain}
 \bibliography{IEEE_Bibliography}
%
%
%
\end{document}